\documentclass[10pt]{article}
\usepackage[reset,a4paper,vmargin=2.5truecm,hmargin=2.0truecm]{geometry}

\usepackage{amsmath,amsthm,amssymb,bm}
\usepackage{amscd}
\usepackage[dvipdfmx]{graphicx}
\usepackage{colortbl}
\usepackage[all]{xy}
\usepackage{stmaryrd}
\usepackage{booktabs} 
\usepackage{framed, color}


\usepackage{amsmath,amsthm,amssymb,bm}

\newcommand{\bbC}{\mathbb{C}}

\newcommand{\bbR}{\mathbb{R}}

\newcommand{\fkg}{\mathfrak{g}}

\newcommand{\fkp}{\mathfrak{p}}

\newcommand{\cF}{\mathcal{F}} 
\newcommand{\cG}{\mathcal{G}}

\newcommand{\cK}{\mathcal{K}} 
\newcommand{\cL}{\mathcal{L}} 
 
\newcommand{\cN}{\mathcal{N}} 
 
\newcommand{\cP}{\mathcal{P}}

\newcommand{\bme}{\bm{e}}

\newcommand{\rM}{\mathrm{M}} 
 
\newcommand{\rO}{\mathrm{O}}

\newcommand{\rU}{\mathrm{U}}


\usepackage{amsmath,amsthm,amssymb,bm}

\DeclareMathOperator{\Ad}{Ad}

\DeclareMathOperator{\diag}{diag}

\DeclareMathOperator{\GL}{GL}

\DeclareMathOperator{\Herm}{Herm}

\DeclareMathOperator{\rank}{rank}

\DeclareMathOperator{\SO}{SO}

\DeclareMathOperator{\Sp}{Sp}
\DeclareMathOperator{\Sym}{Sym}

\DeclareMathOperator{\tr}{tr}

\newcommand{\gl}{\mathfrak{gl}}

\newcommand{\so}{\mathfrak{so}}
\newcommand{\fsp}{\mathfrak{sp}}

\newcommand{\T}{{}^t\!}


\newcommand{\mat}[1]{\begin{pmatrix}#1\end{pmatrix}}

\newcommand{\bmat}[1]{\begin{bmatrix}#1\end{bmatrix}}

\newcommand{\kakko}[1]{\left(#1\right)} 
\newcommand{\ckakko}[1]{\left\{#1\right\}} 

\newcommand{\D}{\displaystyle}



\newcommand{\acc}[1]{\textcolor{ocre}{#1}}

\numberwithin{equation}{section}

\theoremstyle{plain}
\newtheorem{theorem}{Theorem}[section]
\newtheorem{lemma}{Lemma}[section] 
\newtheorem{proposition}{Proposition}[section]
\newtheorem{cor}{Corollary}[section]

\theoremstyle{definition}

\theoremstyle{remark}

\twocolumn




\newcommand{\memo}[1]{#1}
\renewcommand{\memo}[1]{}	

\begin{document}

\title{Matrix-valued Bratu equation associated with the symmetric domains of type BDI and CI}
\author{
Hiroto Inoue\thanks{Institute of Mathematics for Industry, Kyushu University}
}
\date{\today}
\maketitle

\begin{abstract}
A formulation of the exponential matrix solution of the matrix-valued Bratu equation is given, based on the structure of the symmetric domain of type BDI. Moreover, an analog for the symmetric domain of type CI is given. 
\end{abstract}

\tableofcontents

\paragraph{Notation}
Throughout this article, 
$\Sym^+_{n}(\bbR)$ is the set of $n$-dimensional positive-symmetric matrices and $\mathrm{M}_{n,r}(\bbR)$ is the set of $(n\times r)$ real matrices.

\section{Introduction}

We consider the matrix-valued Bratu equation 
\begin{equation}\label{eq_mat_Bratu}
(h^{\prime}(s) h(s)^{-1})^{\prime}=(a\T a) h(s)^{-1}, \quad
a\in \mathrm{M}_{n,r}(\bbR)
\end{equation}
for a $\Sym^+_{n}(\bbR)$-valued function $h(s)$ of $s\in \bbR$ introduced in \cite{I2020}. 
The solution of the initial value problem of the Bratu equation \eqref{eq_mat_Bratu} was given in \cite{I2020} using a specific exponential matrix. 

In this article, we give a geometrical formulation of this solution based on the geometrical structure of the symmetric domains $G/K$. In detail, we express the exponential matrix solution of the equation \eqref{eq_mat_Bratu} by the defining function of the symmetric domain of type BDI according to Cartan's classification, ref. \cite{H, PS}. 
Moreover, we give an analog associated with the symmetric domain of type CI, as Table \ref{Tab}.

\begin{table}
\begin{center}
\caption{Matrix-valued differential equations associated to symmetric domains} \label{Tab}
$
\renewcommand{\arraystretch}{1.2} 
\begin{array}{c||c|c} 
\toprule
\text{domain}
&
\begin{array}{l}
\text{condition for}
\\
\text{$\Delta(w, e^{sB}w)$}
\end{array}
&
\begin{array}{l}
\text{equation of}
\\
\Delta(e^{sB}u_0, e^{sB}u_0)^{-1}
\end{array}
\\ \hline 
\text{BDI}
&\D (a\T a)\frac{s^2}{2}+O(s^3)
&
\begin{array}{r}
(h^{\prime} h^{-1})^{\prime}=(a\T a)h^{-1} \\ a\in \rM_{n,r}(\bbR) 
\end{array} 
\\ \hline 
\text{CI}
&\text{None}
&
\begin{array}{r}
(h'h^{-1})'=(ch^{-1})^2 \\ c\in \Sym_n(\bbR)
\end{array}
\\ \hline
\end{array}
\renewcommand{\arraystretch}{1} 
$
\end{center}
\end{table}

\section{Proof for exponential matrix solution}

\subsection{Block-Gauss decomposition}

\paragraph{Manifold $\Omega(J)$}
We define a submanifold $\Omega(J)$ in the cone $\Sym^+_{2n+r}(\bbR)$ by
\[
\Omega(J)=\ckakko{
G\in \Sym^+_{2n+r}(\bbR);\; JGJ=G^{-1}
}, 
\]
\[
J=\mat{0&0&I_n \\ 0&I_r&0 \\ I_n&0&0}. 
\]
Setting the set $V(J)$ by
\[
V(J)=\ckakko{B\in \Sym_{2n+r}(\bbR);\; JBJ=-B}, 
\]
we can consider the following map: 
\[
V(J)\ni B \mapsto \exp B \in \Omega(J). 
\]

We see that any element $B\in V(J)$ has the following form: 
\[
B=B(b,a,c):=\mat{b&a&c \\  \T a&0&-\T a \\ \T c&-a&-b}, \quad
\begin{array}{l}
b\in \Sym_n(\bbR), \\ a\in \mathrm{M}_{n,r}(\bbR), \\ c\in \mathrm{Alt}_n(\bbR). 
\end{array}
\]
For an element $B\in V(J)$, we define an $\Omega(J)$-valued function $G(s;B)$ by
\[
G(s; B) :=\exp(sB(b,a,c)) \quad (s\in \bbR). 
\]

We define the following sets of matrices: 
\begin{align*}
&\cG(J)=\ckakko{G\in \GL_{2n+r}(\bbR);\; JGJ=\T G^{-1}}\cong \mathrm{O}(n+r,n), 
\\
&\cN(J)
=\ckakko{N\in \cG(J);\; N=\mat{I_n&0&0 \\ *&I_r&0 \\ *&*&I_n}}, 
\\
&\qquad=\ckakko{N=\mat{I_n &0&0 \\ \T n_1&I_r&0 \\ \T n_2&-n_1&I_n},    n_2+\T n_2=-n_1 \T n_1}, 
\\
&\Omega_0(J)=\ckakko{A\in \Omega(J);\; A=\mat{*&0&0 \\ 0&*&0 \\ 0&0&*}}, 
\\
&\qquad=\ckakko{A=\mat{h&0&0 \\ 0&I_r&0 \\ 0&0&h^{-1}}, \T h=h}. 
\end{align*}

\begin{proposition}[Block-Gauss decomposition for $\Omega(J)$]
The following map is bijective: 
\[
\begin{array}{ccc}
\cN(J)\times \Omega_0(J) &\rightarrow& \Omega(J)\\
 (N, A)&\mapsto& NA\T N. 
 \end{array}
\]
\end{proposition}

We denote the variable change defined through this decomposition by
\[
G \rightarrow N, A \rightarrow h, n_1, n_2, 
\]
or simply by $G  \rightarrow h$.

Now we state the exponential matrix solution in [I2020] as follows.

\begin{proposition}\label{prop_3}
For $B=B(b,a,0)\in V(J)$, define a $\Sym^+_n(\bbR)$-valued function $\tilde{h}(s)$ by the variable change
\[
G(s;B)\rightarrow \tilde{h}(s). 
\]
Then, $\tilde{h}(s)$ satisfies the matrix-valued Bratu equation; 
\[
(\tilde{h}^{\prime}(s) \tilde{h}(s)^{-1})^{\prime}=(a\T a) \tilde{h}(s)^{-1}. 
\]
\end{proposition}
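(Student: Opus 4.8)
The plan is to compute $\tilde h(s)$ explicitly from the decomposition and then differentiate the exponential. First I would note that the variable change is transparent on the top-left block: multiplying out $G = N A\,\T N$ with $N\in\cN(J)$ and $A=\diag(h,I_r,h^{-1})\in\Omega_0(J)$, the $(1,1)$-block of $N A\,\T N$ is simply $h$, so $\tilde h(s) = [\,G(s;B)\,]_{11}$, the $(1,1)$-block of $\exp\!\big(sB(b,a,0)\big)$. It is also convenient to record that, right-multiplying by $\tilde h$ and using $(\tilde h^{-1})' = -\tilde h^{-1}\tilde h'\tilde h^{-1}$, the Bratu equation is equivalent to the inverse-free identity $\tilde h'' - \tilde h'\tilde h^{-1}\tilde h' = a\T a$.

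Next I would differentiate $G(s;B) = \exp(sB)$ through $G' = BG$ and read off the closed linear system satisfied by the first block-column $(P,Q,R):=(G_{11},G_{21},G_{31})$,
\[
P' = bP + aQ, \qquad Q' = \T a\,(P-R), \qquad R' = -aQ - bR .
\]
Since $P = \tilde h$ and the decomposition gives $Q = \T n_1\,\tilde h$ and $R = \T n_2\,\tilde h$, the first equation reads $\tilde h'\tilde h^{-1} = b + a\T n_1$, whence $(\tilde h'\tilde h^{-1})' = a\,(\T n_1)'$. Differentiating $Q = \T n_1\,\tilde h$ and substituting the second equation of the system then yields
\[
(\tilde h'\tilde h^{-1})' = a\T a - (a\T a)\,\T n_2 - (a\T n_1)\,b - (a\T n_1)(a\T n_1),
\]
so the whole problem is reduced to showing that the right-hand side collapses to $(a\T a)\,\tilde h^{-1}$.

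The main obstacle is exactly this collapse, i.e. the elimination of the block $\T n_2$ (equivalently $R=G_{31}$) and of the $b$-dependent terms. For this I would invoke the part of the $\Omega(J)$-structure not yet used: the relation $JGJ = G^{-1}$, i.e. $GJG = J$, whose $(1,1)$-block contraction gives $\T R\,P + \T Q\,Q + P R = 0$, equivalently the Gauss-factor constraint $n_2 + \T n_2 = -\,n_1\T n_1$ defining $\cN(J)$; together with the linear first integrals coming from the left kernel of $B$, namely that $\T v_1(P+R) + \T v_2\,Q$ is constant (hence equal to $\T v_1$) for every pair $(v_1,v_2)$ with $a v_2 + b v_1 = 0$. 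Feeding these relations into the displayed expression should convert the quadratic $n_1,n_2$-terms and the $b$-term into the single factor $a\T a$ standing to the left of $\tilde h^{-1}$. I expect the only delicate point to be the non-commutative bookkeeping that makes the right-multiplication by $\tilde h^{-1}$ emerge correctly from those quadratic terms; the scalar case $n=r=1$, where $\tilde h$ is of $\cosh^2$-type and the constraint reduces to $2n_2 = -n_1^{2}$, already exhibits the cancellation and can serve as a template for the general computation.
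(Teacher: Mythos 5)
Your reduction is correct as far as it goes: with $G=NA\T N$ the $(1,1)$-block is indeed $\tilde h$, the blocks $Q:=G_{21}=\T n_1\tilde h$ and $R:=G_{31}=\T n_2\tilde h$ are as you say, the first-column system $P'=bP+aQ$, $Q'=\T a(P-R)$, $R'=-aQ-bR$ is right, and your displayed identity
\begin{align*}
(\tilde h'\tilde h^{-1})' = a\T a - (a\T a)\T n_2 - (a\T n_1)b - (a\T n_1)(a\T n_1)
\end{align*}
follows correctly from $\tilde h'\tilde h^{-1}=b+a\T n_1$. But the ``collapse'' you defer is not bookkeeping; it is the entire content of the proposition, and the two ingredients you propose are not shown to produce it. The collapse is equivalent to the single identity $\tilde h\,n_1'=a$, i.e.\ $(\T n_1)'=\T a\,\tilde h^{-1}$, and your substitutes lose exactly this information. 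The Gauss constraint $n_2+\T n_2=-n_1\T n_1$ controls only the symmetric part of $n_2$, while your display involves $\T n_2$ itself (its skew part is nonzero along the solution, since $n_2'=-n_1\T n_1'$ there). The left-kernel first integrals assert only that the columns of $\bigl(\begin{smallmatrix}P+R-I\\ Q\end{smallmatrix}\bigr)$ lie in the column space of $\bigl(\begin{smallmatrix}b\\ \T a\end{smallmatrix}\bigr)$; when $b$ is invertible this reads $Q=\T a\,b^{-1}(P+R-I)$, and substituting it into your display produces terms such as $a\T a\,b^{-1}\T n_2\,b$ and $a\T a\,b^{-1}\tilde h^{-1}b$ that noncommutativity prevents from recombining into $(a\T a)\tilde h^{-1}$. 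Your scalar template is misleading on precisely this point: for $n=r=1$ (and also for $b=0$, where $P$ is a power series in $a\T a$) all relevant matrices commute, and the constraint plus the first integral do force the cancellation; in the general case that commutativity is absent and nothing in your plan replaces it.

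The repair is to use more of $G$ than its first block column. Since $(P,Q,R)$ determines $h,n_1,n_2$ and hence all of $G=NA\T N$, you may compute the blocks of $G'G^{-1}$ in the Gauss variables; this is what the paper does (there packaged as the Euler--Lagrange system of the Lagrangian $\frac12\tr((G'G^{-1})^2)$, but the same identities follow from direct block multiplication). Writing $m=n_1\T n_1'+n_2'$, the $(1,3)$-block of $G'G^{-1}=B(b,a,0)$ gives $\tilde h\,\T m\,\tilde h=0$, hence $m=0$, and then the $(1,2)$-block gives $\tilde h\,n_1'+\tilde h\,\T m\,\tilde h\,n_1=\tilde h\,n_1'=a$. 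This is the identity your argument is missing; once it is in hand, $(\tilde h'\tilde h^{-1})'=a(\T n_1)'=(a\T a)\tilde h^{-1}$ and the proposition follows, which is exactly the paper's route (with the case $c\neq 0$ explaining why the hypothesis $c=0$ is what makes the quadratic terms disappear).
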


In the rest of this section, we give an alternative proof for this proposition based on the Lagrangian calculus.


\subsection{Lagrangian calculus}


\begin{lemma}\label{lem_1}
Let $G=G(s)$ be an $\Omega(J)$-valued function. Under the variable change
\[
G \rightarrow N, A \rightarrow h, n_1, n_2, 
\]
the following equation holds: 
\begin{align*}
&\frac12\tr((G' G^{-1})^2)
\\
&=\frac12\tr(( A' A^{-1})^2)+\tr(A\T N' \T N^{-1} A^{-1}N^{-1} N')
\\
&=\tr(( h' h^{-1})^2)+2\tr(hn_1' \T n_1')+\tr(h\T mhm), 
\end{align*}
\[
m=n_1 \T n'_1+n'_2=-\T m. 
\]
\end{lemma}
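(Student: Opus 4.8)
The plan is to substitute the block-Gauss factorisation $G=NA\,\T N$ into the logarithmic derivative $G'G^{-1}$ and then exploit the conjugation invariance of the trace. Writing $\omega:=N^{-1}N'$ for the left logarithmic derivative of $N$ and using $\T N\,\T{(N^{-1})}=I_{2n+r}$, one finds
\[
G'G^{-1}=N'N^{-1}+N(A'A^{-1})N^{-1}+NA\,\T\omega\,A^{-1}N^{-1},
\]
so that, with $Y:=N^{-1}(G'G^{-1})N=\omega+A'A^{-1}+A\,\T\omega\,A^{-1}$, we have $\tr((G'G^{-1})^2)=\tr(Y^2)$. This writes the integrand as a sum of three pieces of different triangular type, which is exactly what forces the cross-terms to collapse.

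For the first equality I would expand $\tr(Y^2)$ into the six pairings of $Q:=A'A^{-1}$, $\omega$ and $A\,\T\omega\,A^{-1}$. Since $A=\diag(h,I_r,h^{-1})$ is block-diagonal, $\omega$ is strictly block-lower-triangular while $A\,\T\omega\,A^{-1}$ is strictly block-upper-triangular; hence $\tr(\omega^2)=\tr((A\,\T\omega\,A^{-1})^2)=0$ and $\tr(Q\omega)=\tr(QA\,\T\omega\,A^{-1})=0$, each product being strictly triangular with vanishing diagonal blocks. Only $\tr(Q^2)$ and $2\tr(\omega\,A\,\T\omega\,A^{-1})=2\tr(A\T N'\T N^{-1}A^{-1}N^{-1}N')$ survive, and dividing by two gives the first displayed identity.

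For the second equality I would insert the explicit blocks. The diagonal part is immediate: $\tfrac12\tr(Q^2)=\tfrac12(\tr((h'h^{-1})^2)+\tr((h^{-1}h')^2))=\tr((h'h^{-1})^2)$. Computing $N^{-1}$ from the nilpotency of $N-I_{2n+r}$ and simplifying its $(3,1)$-block by the defining relation $n_2+\T n_2=-n_1\T n_1$, I can read off the blocks of $\omega$, namely $\T n_1'$, $-n_1'$ and $\omega_{31}=n_1\T n_1'+\T n_2'$. Pairing blocks of $\omega$ with those of $A\,\T\omega\,A^{-1}$, the $(2,1)$ and $(3,2)$ slots each yield $\tr(hn_1'\T n_1')$, producing the middle term $2\tr(hn_1'\T n_1')$.

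The main obstacle is the remaining $(3,1)$/$(1,3)$ pairing, which contributes $\tr(\omega_{31}\,h\,\T\omega_{31}\,h)$; the task is to match this with $\tr(h\,\T m\,h\,m)$ for $m=n_1\T n_1'+n_2'$. This is where the constraint does the real work: differentiating $n_2+\T n_2=-n_1\T n_1$ shows $m+\T m=0$, so $m$ is alternating, and the same differentiated relation must be used to reconcile the $\T n_2'$ carried by $\omega_{31}$ with the $n_2'$ in $m$. Carrying every transpose through this identification in a way consistent with the constraint is the delicate bookkeeping on which the whole reduction rests; everything else is routine trace manipulation.
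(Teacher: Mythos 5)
Your overall route is the same as the paper's: split $G'G^{-1}$ into a strictly block-lower piece, a block-diagonal piece, and a (conjugated) strictly block-upper piece, kill the cross terms by trace-orthogonality of triangular types, and evaluate the two survivors blockwise; conjugating by $N$ before expanding is just a repackaging of the paper's $\Ad(N)$-membership argument plus cyclicity. Your first equality, the evaluation $\tfrac12\tr(Q^2)=\tr((h'h^{-1})^2)$, and the middle term $2\tr(hn_1'\T n_1')$ are all correct.

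The genuine gap is exactly the step you defer as ``delicate bookkeeping'', and it cannot be closed in the form you propose. With the paper's displayed parametrization of $\cN(J)$ (the $(3,1)$ block of $N$ is $\T n_2$) you correctly obtain $\omega_{31}=n_1\T n_1'+\T n_2'$, but no use of the constraint identifies $\tr(\omega_{31}h\T\omega_{31}h)$ with $\tr(h\T m h m)$ for $m=n_1\T n_1'+n_2'$: differentiating $n_2+\T n_2=-n_1\T n_1$ pins down only the \emph{symmetric} part of $n_2'$, whereas $\omega_{31}-m=\T n_2'-n_2'$ is (minus twice) its unconstrained skew part. Concretely, writing $n_2=-\tfrac12 n_1\T n_1+\alpha$ with $\alpha$ alternating gives $m=\beta+\alpha'$ and $\omega_{31}=\beta-\alpha'$, where $\beta=\tfrac12(n_1\T n_1'-n_1'\T n_1)$, hence
\[
\tr(hmhm)-\tr(h\omega_{31}h\omega_{31})=4\tr(h\beta h\alpha'),
\]
which is generically nonzero: for $n=2$, $r=1$, $h=\diag(1,2)$, $n_1=\left(\begin{smallmatrix}s\\ s^2\end{smallmatrix}\right)$, $\alpha=\left(\begin{smallmatrix}0&s\\ -s&0\end{smallmatrix}\right)$ it equals $-8s^2$. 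So the reconciliation you invoke does not exist; what your computation actually proves is the lemma with $m$ taken to be $\omega_{31}=n_1\T n_1'+\T n_2'=-(n_1'\T n_1+n_2')$. The stated $m=n_1\T n_1'+n_2'$ is what $\omega_{31}$ becomes if one instead parametrizes $\cN(J)$ with $(3,1)$ block equal to $n_2$; that is the convention the paper's own proof silently uses (its displayed formula for $N^{-1}N'$ is the one valid in that convention, not in the one the paper defines). The honest way to finish your proof is therefore to flag this transpose mismatch and relabel $n_2\leftrightarrow\T n_2$, not to appeal to the differentiated constraint, which cannot bridge the two.
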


\begin{proof}
To prove this lemma, we use the following orthogonal relation.

\begin{lemma}\label{lem_2}
Define a subset of $M=\mathrm{M}_{2n+r}(\bbR)$ as
\[
M_{\leq 0}=\ckakko{X\in M;\; X=\mat{*&0&0 \\ *&*&0 \\ *&*&*}}, 
\]
\[
M_{< 0}=\ckakko{X\in M;\; X=\mat{0&0&0 \\ *&0&0 \\ *&*&0}}. 
\]
Then, it holds that
\[
\tr(X_1 X_2)=0
\]
for any $X_1\in M_{\leq 0}, X_2 \in M_{<0}$. 
\end{lemma}

Inserting $G=NA\T N$ into $G'G^{-1}$, we get
\begin{align*}
&G'G^{-1}
\\
&= NA \T N' \T N^{-1} A^{-1} N^{-1} +  N A' A^{-1}N^{-1} + N' N^{-1}
\\
&=:X_1 + X_2 + X_3. 
\end{align*}
Here we see that
\[
X_1\in \Ad(N) \T M_{<0}, \quad X_2\in \Ad(N) \T M_{\leq 0} \cap M_{\leq 0},  
\]
\[
X_3 \in M_{<0}. 
\]
Then, by Lemma \ref{lem_2}, we get
\begin{align*}
\frac12\tr((G' G^{-1})^2)&=\tr(X_1 X_3)+\frac12\tr(X_2 X_2). 
\end{align*}

Inserting
\begin{align*}
A'A^{-1}&=\mat{h' h^{-1}&0&0 \\ 0&0&0 \\ 0&0&-h^{-1}h'}, 
\\
N^{-1}N'
&=\mat{0&0&0 \\ \T n'_1&0&0 \\ m&-n'_1&0}, \; m=n_1 \T n'_1+n'_2, 
\end{align*}
we calculate as
\begin{align*}
\tr(X_1 X_3)&=\tr(A\T N' \T N^{-1} A^{-1}N^{-1} N')
\\
&=2\tr(hn_1' \T n_1')+\tr(h\T mhm),
\\
\frac12\tr(X_2 X_2)&=\frac12\tr(( A' A^{-1})^2)=\tr(( h' h^{-1})^2). 
\end{align*}
Therefore, we get the Lagrangian as
\begin{align*}
\frac12\tr((G' G^{-1})^2)=&\tr(( h' h^{-1})^2)
+2\tr(hn_1' \T n_1')
\\
&+\tr(h\T mhm). 
\end{align*}
\end{proof}

Next we proceed to derive the Euler-Lagrange equation. For the preparation, we deal with the constrain condition for $G(s)$ as follows.  
\begin{lemma}
For the Lagrangian
\[
\cL=\frac12\tr((G'G^{-1})^2)
+\tr((JGJ\T G-I)\lambda)+\tr((G-\T G)\mu)
\]
for $\GL(n,\bbR)$-valued function $G, \lambda, \mu $, 
the Euler-Lagrange equation reads
\begin{align*}
&
\begin{cases}
(G'G^{-1})'=0, 
\\
G(\mu-\T \mu)G+J(\mu-\T\mu)J=0. 
\end{cases}
\end{align*}
\end{lemma}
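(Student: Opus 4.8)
The plan is to treat $G,\lambda,\mu$ as independent fields and impose $\delta\cL=0$ one variation at a time. Varying $\lambda$ and $\mu$ is immediate and returns the two constraints $JGJ\,\T G=I$ (so that $G\in\cG(J)$) and $G=\T G$, which together pin $G$ to $\Omega(J)$; on this locus I may freely use $JGJ=G^{-1}$ and the derived identity $GJG=J$. The substance is the variation in $G$. For the kinetic term I would use a standard integration by parts: writing $P=G'G^{-1}$, one has $\delta\,\frac12\tr(P^2)=\tr(P\,\delta P)$ with $\delta P=(\delta G)'G^{-1}-P(\delta G)G^{-1}$, and after integrating by parts the two pieces collapse to $-\tr\!\bigl(G^{-1}(G'G^{-1})'\,\delta G\bigr)$. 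Differentiating the two constraint terms produces a $\lambda$-force $J\T G\lambda J+J\T GJ\,\T\lambda$ and a $\mu$-force $\mu-\T\mu$, so stationarity reads
\[
-\,G^{-1}(G'G^{-1})'+\bigl(J\T G\,\lambda\,J+J\T GJ\,\T\lambda\bigr)+(\mu-\T\mu)=0 .
\]

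The mechanism that untangles this is the pair of commuting involutions on $\rM_{2n+r}(\bbR)$, transposition and $\Ad_J\colon X\mapsto JXJ$ (recall $J^2=I$). They split the matrix space into the $\pm1$ eigenspaces of $\tau(X)=J\,\T X\,J$, namely the Lie algebra $\fkg(J)=\{X;\ JXJ=-\T X\}$ of $\cG(J)$ and its complement $\{X;\ JXJ=\T X\}$. The key structural fact, obtained by differentiating the group relation $JGJ=\T G^{-1}$ along the curve, is that $P=G'G^{-1}$ satisfies $JPJ=-\T P$, so $P$ and hence $Q:=(G'G^{-1})'$ lie in $\fkg(J)$. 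Multiplying the stationarity equation on the left by $G$ and using $GJG=J$ turns the $\lambda$-force into $J\lambda J+\T\lambda$, which lands in the complement $\{X;\ JXJ=\T X\}$, leaving
\[
-(G'G^{-1})'+\bigl(J\lambda J+\T\lambda\bigr)+G(\mu-\T\mu)=0 .
\]
Projecting with $\frac12(I-\tau)$ onto $\fkg(J)$ then annihilates the $\lambda$-force, while the orthogonal projection merely solves for $\lambda$.

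Reading off the $\fkg(J)$-component yields the geodesic equation $(G'G^{-1})'=0$, exactly as one expects since $\Omega(J)$ is totally geodesic and the bi-invariant energy $\frac12\tr((G'G^{-1})^2)$ makes the one-parameter subgroups $\exp(sB)$ its geodesics; what survives of the symmetry-constraint force is $G(\mu-\T\mu)+J(\mu-\T\mu)GJ$, and multiplying on the right by $G$ together with $GJG=J$ converts the requirement that it vanish into the stated symmetric form $G(\mu-\T\mu)G+J(\mu-\T\mu)J=0$. I expect the main obstacle to be precisely this decoupling: there are two matrix-valued multipliers against a constraint surface of smaller dimension, so I must check that the group-constraint force sweeps out exactly the complement of $\fkg(J)$ (so that it drops out of, rather than contaminates, the geodesic equation) and that the antisymmetric symmetry force carries no tangential component, so that it produces the second equation and nothing more. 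Keeping track of the four joint eigenspaces of the two involutions, and confirming that every residual term collapses under $JGJ=G^{-1}$ and $GJG=J$, is where the care is needed.
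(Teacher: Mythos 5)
Your setup and reductions track the paper's own proof closely: varying $\lambda,\mu$ gives the constraints, varying $G$ gives $-G^{-1}(G'G^{-1})'+J\T G\lambda J+J\T GJ\T\lambda+(\mu-\T\mu)=0$, left multiplication by $G$ together with $GJ\T G=J$ gives $(G'G^{-1})'=J\lambda J+\T\lambda+G(\mu-\T\mu)$, and your projection $\frac12(I-\tau)$ with $\tau(X)=J\T XJ$ is (up to a factor $2$) exactly the paper's map $X\mapsto X-J\T XJ$, which indeed annihilates the $\lambda$-force. Up to this point you are correct, and you correctly record that what remains is
\begin{equation*}
2(G'G^{-1})'=G(\mu-\T\mu)+J(\mu-\T\mu)GJ .
\end{equation*}

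The gap is in the final, decisive step. You ``read off'' $(G'G^{-1})'=0$ as the $\fkg(J)$-component, but this is not a projection argument: as you yourself note, the $\mu$-force \emph{survives} the projection, i.e.\ $\frac12\bigl(G(\mu-\T\mu)+J(\mu-\T\mu)GJ\bigr)$ lies in the same $\tau$-eigenspace as $(G'G^{-1})'$. The projected equation therefore only asserts that these two quantities are equal, not that each vanishes; the appeal to $\Omega(J)$ being totally geodesic is an expectation, not a proof, and treating the vanishing of the $\mu$-force as a ``requirement'' presupposes the very split the lemma asserts. Your proposed remedy---tracking the joint eigenspaces of transposition and $\Ad_J$---cannot close this either, since $(G'G^{-1})'$ is not in an eigenspace of plain transposition. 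The decoupling needs a different involution, and this is precisely what the paper supplies: multiply the displayed equation on the right by $G$ to get $2(G'G^{-1})'G=G(\mu-\T\mu)G+J(\mu-\T\mu)J$; the left-hand side equals $2(G''-G'G^{-1}G')$, which is \emph{symmetric} because $G=\T G$, while the right-hand side is visibly \emph{skew-symmetric}; a matrix that is both is zero, so each side vanishes, which is exactly the pair of stated equations. (Equivalently, $(G'G^{-1})'$ is fixed by the $G$-twisted transposition $X\mapsto G\T XG^{-1}$, while the surviving $\mu$-force is anti-fixed.) Note also that you call $G(\mu-\T\mu)G+J(\mu-\T\mu)J$ the ``stated symmetric form''; it is skew-symmetric, and that sign is the whole point of the step you are missing.
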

\begin{proof}
By the property of the matrix trace, we have the equation
\begin{align*}
\tr(JG_1J\T G_2\lambda)
&=\tr(G_1 \cdot J\T G_2 \lambda J)
\\
&=\tr(G_2 \cdot J\T G_1 J\T\lambda ). 
\end{align*}
Then, we can derive the Euler-Lagrange equation as
\begin{eqnarray*}
G:&&G^{-1}(G''-G'G^{-1}G')G^{-1}=J\T G\lambda J+J\T GJ\T \lambda, 
\\
&&\Leftrightarrow
(G'G^{-1})'=J\lambda J+\T \lambda+G(\mu-\T \mu), 
\\
\lambda: &&GJ\T G=J, 
\\
\mu: &&G-\T G=0. 
\end{eqnarray*}
Applying the map
$X\mapsto X-J\T XJ$ to the both sides of the first equation, we get
\begin{align*}
&2(G'G^{-1})'=G(\mu-\T \mu)+J(\mu-\T\mu)JG^{-1}
\\
\Leftrightarrow&
2(G'G^{-1})'G=G(\mu-\T \mu)G+J(\mu-\T\mu)J. 
\end{align*}
We see that the left hand side is a symmetric matrix and the right hand side is a skew-symmetric matrix. Therefore, they turn to be $0$. 
\end{proof}

\paragraph{Derivation of Euler-Lagrange equation}
Now we derive the Euler-Lagrange equation for our Lagrangian
\begin{align*}
\cL&=\cL(h,n_1,n_2,h',n_1',n_2')
\\
&=\tr(( h' h^{-1})^2)
+2\tr(hn_1' \T n_1')+\tr(h\T mhm), 
\\
&m=n_1 \T n'_1+n'_2. 
\end{align*}
The result is written as follows:  
\begin{eqnarray*}
\text{variable } x: &\text{E-L eq. }\D \frac{d}{ds}\frac{\partial \cL}{\partial x'}=\frac{\partial \cL}{\partial x} \notag
\\ \hline  \\
h: &
(h^{\prime} h^{-1})^{\prime}=hn_1'\T n_1'+ h\T mhm, 
\\
n_1: &
\ckakko{2hn_1'+h\T mhn_1}' =h mhn_1', 
\\
n_2: &
(h\T mh)'=0. 
\end{eqnarray*}

From the third equation, we get
\[
h\T mh=\tilde{c}  : \text{constant matrix}, \quad \tilde{c}\in \mathrm{Alt}_{n}(\bbR). 
\]

Thus, the above equations become
\begin{eqnarray*}
h: &(h^{\prime} h^{-1})^{\prime}=hn_1'\T n_1'-  \tilde{c} h^{-1}\tilde{c}h^{-1},
\\
n_1: &2(hn_1')'+2(\tilde{c}n_1)' =0. 
\end{eqnarray*}

From the second equation, we get
\[
hn_1' +{c}n_1 =\tilde{a}: \text{constant matrix}. 
\]

Especially, in the case $\tilde{c}=0$, we get
$
hn_1' =\tilde{a}
$.  
Inserting this into the first equation, we obtain the matrix-valued Bratu equation
\begin{eqnarray*}
h: &(h^{\prime} h^{-1})^{\prime}=(\tilde{a}\T \tilde{a}) h^{-1}. 
\end{eqnarray*}

Next we identify the constant matrices $\tilde{c}, \tilde{a}$. 

Defining the function $G(s)$ by the relation $G\rightarrow h, n_1, n_2$, we have that
\[
G'G^{-1}=\mat{*&hn_1'-hmhn_1&h\T mh \\ *&*&* \\ *&*&*}
=\mat{*&\tilde{a}&\tilde{c} \\ *&*&* \\ *&*&*}. 
\]
On the other hand, $G(s)$ satisfies the original E-L equation
\[
G'G^{-1}=\text{(constant matrix)}. 
\]
This shows that $G(s)$ is given as
\[
G=G(s; B(\tilde{b},\tilde{a}, \tilde{c})), \quad \tilde{b}\in \Sym_n(\bbR). 
\]

We summarize the above calculation in the following table: 
{\small
\[
\begin{array}{ccc}\hline \\
G(s;B(b,a,c))&\rightarrow&h(s), n_1(s), n_2(s)
\\ \\
(G'G^{-1})'=0&&
\begin{cases}
(h^{\prime} h^{-1})^{\prime}=hn_1'\T n_1'-  {c} h^{-1}{c}h^{-1} 
\\
hn_1' +{c}n_1={a}
\\
h\T(n_1 \T n'_1+n'_2)h=c 
\end{cases}
\\ && \\
\D\frac12\tr((G'G^{-1})^2)&
&\tr(( h' h^{-1})^2)
+2\tr(hn_1' \T n_1')+\tr(h\T mhm)
\\ \\ \hline
\end{array}
\]
}


\begin{theorem}
The equation
\[
G'(s)G(s)^{-1}=B(b,a,c)
\]
for an $\Omega(J)$-valued function $G(s)$ is expressed as
\begin{equation}\label{eq_hn1n2}
\begin{cases}
(h^{\prime} h^{-1})^{\prime}=hn_1'\T n_1'-  {c} h^{-1}{c}h^{-1}, 
\\
hn_1' +{c}n_1={a}, 
\\
h\T(n_1 \T n'_1+n'_2)h=c 
\end{cases}
\end{equation}
by the variable change $G(s)\rightarrow h(s), n_1(s), n_2(s)$. 
Especially, in the case ${c}=0$, it is equivalent to the matrix-valued Bratu equation
\[
(h^{\prime} h^{-1})^{\prime}=({a}\T {a}) h^{-1}. 
\]
\end{theorem}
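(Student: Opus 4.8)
The plan is to deduce the system \eqref{eq_hn1n2} from the single first-order matrix equation $G'G^{-1}=B(b,a,c)$ by matching it entry by entry in the block-Gauss coordinates. Since the Block-Gauss decomposition gives a bijection $G\leftrightarrow(N,A)$, hence $G\leftrightarrow(h,n_1,n_2)$, the equation on $G$ and the system on $(h,n_1,n_2)$ are literally the same condition once all block entries are compared, which is why the result is an equivalence rather than a one-sided implication.

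First I would insert $G=NA\T N$ and use the splitting $G'G^{-1}=X_1+X_2+X_3$ together with the block forms of $A'A^{-1}$ and $N^{-1}N'$ established in the proof of Lemma~\ref{lem_1}. As recorded in the identification computation above, the first block-row of $G'G^{-1}$ has $(1,2)$-entry $hn_1'-hmhn_1$ and $(1,3)$-entry $h\T m h$, where $m=n_1\T n_1'+n_2'$. Comparing with $B(b,a,c)=\mat{b&a&c\\ \T a&0&-\T a\\ \T c&-a&-b}$ immediately yields $h\T m h=c$, which is the third equation; and it yields $hn_1'-hmhn_1=a$, which, since the skew constraint $n_2+\T n_2=-n_1\T n_1$ forces $\T m=-m$ and hence $hmh=-c$, rewrites as $hn_1'+cn_1=a$, the second equation.

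For the first equation I would use the $(1,1)$-block. Because $b$ ranges over all of $\Sym_n(\bbR)$, the effective content of ``the $(1,1)$-block equals the constant $b$'' is simply that its $s$-derivative vanishes; differentiating the relation $h'h^{-1}+(\text{terms in }h,n_1,m)=b$ and substituting the two relations $h\T m h=c$ and $hn_1'+cn_1=a$ already obtained produces $(h'h^{-1})'=hn_1'\T n_1'-ch^{-1}ch^{-1}$. This is exactly the $h$-component of the Euler--Lagrange system derived earlier (after eliminating $m$ via $m=-h^{-1}ch^{-1}$), so to avoid recomputing $[X_1]_{11}$ I would cite that derivation. The remaining block entries of $G'G^{-1}=B$ are then automatically consistent, since both sides lie in $V(J)$ and an element of $V(J)$ is determined by its first block-row.

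Finally, in the case $c=0$ the third equation forces $h\T m h=0$, hence $m=0$; the second gives $hn_1'=a$, so that $\T n_1'=\T a h^{-1}$ and $hn_1'\T n_1'=a\T a h^{-1}$, while $ch^{-1}ch^{-1}=0$. Substituting into the first equation gives the matrix-valued Bratu equation $(h'h^{-1})'=(a\T a)h^{-1}$, in agreement with Proposition~\ref{prop_3}. I expect the main obstacle to be the block-matrix bookkeeping in extracting the $(1,2)$- and $(1,3)$-entries of $G'G^{-1}$ through the $\Ad(N)$ conjugation while correctly tracking the transposes and the constraint $n_2+\T n_2=-n_1\T n_1$; conceptually nothing is delicate, because the variational reduction of Lemma~\ref{lem_1} already fixes the answer that the direct block computation must reproduce.
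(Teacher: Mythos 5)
Your derivation of the system from $G'G^{-1}=B(b,a,c)$ (the forward half) is correct, and it takes a genuinely different route from the paper's. The paper works variationally: Lemma \ref{lem_1} transports the Lagrangian $\frac12\tr((G'G^{-1})^2)$ to the coordinates $(h,n_1,n_2)$, a separate lemma identifies the constrained Euler--Lagrange equation as $(G'G^{-1})'=0$, the three Euler--Lagrange equations are integrated (producing integration constants $\tilde c$, $\tilde a$), and only at the end is the first block-row of $G'G^{-1}$ computed to identify the constants. You instead read the second and third equations of \eqref{eq_hn1n2} directly off the $(1,2)$- and $(1,3)$-blocks of $G'G^{-1}=B$ and obtain the first equation by differentiating the $(1,1)$-block identity; this works, and I checked it: with $[G'G^{-1}]_{11}=h'h^{-1}-hn_1'\T n_1+cn_2$ (a block the paper never records, so you must compute it yourself --- the Euler--Lagrange derivation you propose to cite gives the $h$-equation of a stationary curve, not this block entry), differentiation together with $(hn_1')'=-cn_1'$ yields $(h'h^{-1})'=hn_1'\T n_1'-c(n_1'\T n_1+n_2')=hn_1'\T n_1'-ch^{-1}ch^{-1}$. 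One thing your computation will surface: the $(3,1)$-entry of $N^{-1}N'$ is $n_1\T n_1'+\T n_2'$, not the paper's $m=n_1\T n_1'+n_2'$ (a transpose slip in the paper; both are skew, but they differ); with that corrected reading all three equations come out exactly as stated. Your route is more elementary, dispensing with the variational calculus entirely, at the cost of the Lagrangian interpretation the paper wants for its broader program.

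The genuine gap is in your closing claim that the remaining blocks are "automatically consistent, since both sides lie in $V(J)$ and an element of $V(J)$ is determined by its first block-row." The second half is true, but the first is not: for an $\Omega(J)$-valued curve one only gets $J(G'G^{-1})J=-\T(G'G^{-1})$, i.e.\ $G'G^{-1}\in\fkg(J)$, and symmetry of $G'G^{-1}$ is equivalent to $[G,G']=0$, which fails for general curves in $\Omega(J)$. Elements of $\fkg(J)$ are \emph{not} determined by their first block-row: every matrix
\[
X=\mat{0&0&0 \\ x_{21}&x_{22}&0 \\ x_{31}&-\T x_{21}&0}, \quad x_{22}\in\mathrm{Alt}_r(\bbR),\; x_{31}\in\mathrm{Alt}_n(\bbR),
\]
lies in $\fkg(J)$ with vanishing first block-row. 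This matters for the converse direction, which your equivalence claim needs: an $\Omega(J)$-valued solution of $G'G^{-1}=B$ is necessarily $G(s)=e^{sB}G(0)$ with $[B,G(0)]=0$ (symmetry of $e^{sB}G(0)$ forces the commutation), whereas the system \eqref{eq_hn1n2} admits solutions with arbitrary initial data $h(0)$, $h'(0)$, $n_1(0)$ and skew part of $n_2(0)$; generic such data violate the commutation constraint, so the system does not imply $G'G^{-1}=B(b_0,a,c)$ for any $b_0$. Only the forward implication is valid --- which is all that Proposition \ref{prop_3} needs, and, to be fair, the paper's own wording ("is expressed as") and its summary table, which pairs the system with $(G'G^{-1})'=0$ rather than with $G'G^{-1}=B$, carry the same imprecision.
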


The exponential matrix solution (Proposition \ref{prop_3}) follows from this theorem.

\memo
{

\begin{corollary}
方程式\eqref{eq_hn1n2}は, 変数変換
\begin{align*}
H(s):=\mat{I_n &0 \\ \T n_1(s) & I_r}\mat{h(s) & 0 \\ 0&I_r}\mat{I_n & n_1(s) \\ 0&I_r}
\end{align*}
によって
\begin{align*} 
\mat{I_n&0}(H'H^{-1})'=-c\mat{\T n_2&n_1}'
\end{align*}
と表される. 
\end{corollary}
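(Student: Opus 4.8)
The plan is to recognize $H(s)$ as a small block-Gauss (Cholesky) factorization, compute $H'H^{-1}$ in closed form, project onto the first $n$ rows, and then reduce the resulting derivative using the three equations of \eqref{eq_hn1n2}. First I would expand the defining product: writing $M=\mat{I_n&0\\\T n_1&I_r}$ and $D=\mat{h&0\\0&I_r}$, one has $H=MD\,\T M=\mat{h&hn_1\\\T n_1 h&\T n_1 h n_1+I_r}$, which is precisely the leading $(n+r)$-block of the decomposition $G=NA\,\T N$. Since the factorization is triangular, $H'H^{-1}=M'M^{-1}+M(D'D^{-1})M^{-1}+M(D\,\T M'\,\T M^{-1}D^{-1})M^{-1}$, and each summand can be evaluated block by block using $M'M^{-1}=\mat{0&0\\\T n_1'&0}$. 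Projecting with $\mat{I_n&0}$ then produces the clean intermediate identity
\[
\mat{I_n&0}H'H^{-1}=\mat{h'h^{-1}-hn_1'\T n_1 & hn_1'}.
\]

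Next I would differentiate this once in $s$ and feed in the three relations of \eqref{eq_hn1n2}. The second equation $hn_1'+cn_1=a$, with $a,c$ constant, gives $(hn_1')'=-cn_1'$; this already matches the right-hand block $-cn_1'$ and lets me rewrite $(hn_1'\T n_1)'=-cn_1'\T n_1+hn_1'\T n_1'$. For the left block I would insert the first equation $(h'h^{-1})'=hn_1'\T n_1'-ch^{-1}ch^{-1}$, so that the $hn_1'\T n_1'$ contributions cancel and the block collapses to $-ch^{-1}ch^{-1}+cn_1'\T n_1$. Finally, from the third equation together with the skew-symmetry $\T m=-m$ of $m=n_1\T n_1'+n_2'$, I obtain $m=-h^{-1}ch^{-1}$ and hence $\T n_2'=-m-n_1'\T n_1$, which yields $-c\,\T n_2'=cm+cn_1'\T n_1=-ch^{-1}ch^{-1}+cn_1'\T n_1$; this is exactly the left block. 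Assembling the two blocks gives $\mat{I_n&0}(H'H^{-1})'=\mat{-c\,\T n_2' & -cn_1'}=-c\mat{\T n_2&n_1}'$, as claimed.

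The main obstacle is the bookkeeping in the reduction step: one must simultaneously respect the skew-symmetry of $m$, the constancy of $a$ and $c$, and the defining relation $m=n_1\T n_1'+n_2'$, so that the various $n_1'\T n_1$- and $h^{-1}ch^{-1}$-type cross terms combine correctly. I expect the cleanest route is to record the three consequences $(hn_1')'=-cn_1'$, $(h'h^{-1})'=hn_1'\T n_1'-ch^{-1}ch^{-1}$, and $m=-h^{-1}ch^{-1}$ at the outset, and only afterward assemble the projected derivative; this localizes every cancellation in one short computation and avoids reopening the $H'H^{-1}$ calculation.
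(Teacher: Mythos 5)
Your proposal is correct and follows essentially the same route as the paper's proof: both arrive at the intermediate identity $\mat{I_n&0}H'H^{-1}=\mat{h'h^{-1}-hn_1'\T n_1 & hn_1'}$ and then differentiate and substitute the three relations of \eqref{eq_hn1n2}, using $(hn_1')'=-cn_1'$ and $\T n_2'=h^{-1}ch^{-1}-n_1'\T n_1$ exactly as the paper does. The only cosmetic difference is that you compute $H'H^{-1}$ via the logarithmic-derivative splitting of the factorization $H=MD\,\T M$, whereas the paper multiplies $H'$ by the explicit inverse $H^{-1}=\mat{h^{-1}+n_1\T n_1&-n_1\\ -\T n_1&I_r}$.
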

\begin{proof}
定義より$H^{-1}$は
\begin{align*}
H^{-1}&=\mat{I_n & -n_1 \\ 0 & I_r}\mat{h^{-1}& 0 \\ 0 &I_r}\mat{I_n&0 \\ -\T n_1 & I_r}
\\
&=\mat{h^{-1}+n_1 \T n_1 & -n_1 \\ -\T n_1 & I_r}
\end{align*}
なので, $H'H^{-1}$は
\begin{align*}
&H'H^{-1}=
\\
&\mat{h'h^{-1}+h'n_1\T n_1-(hn_1)'\T n_1&-h' n_1+(hn_1)' \\ (\T n_1 h)'(h^{-1}+n_1 \T n_1)-(\T n_1 hn_1)' \T n_1&-(\T n_1 h)'n_1 +(\T n_1 hn_1)'}
\\
&=\mat{h'h^{-1}-hn_1'\T n_1&hn_1' \\ 
(\T n_1 h)'h^{-1}-\T n_1 hn_1' \T n_1&\T n_1 hn_1'}
\end{align*}
と計算される. よって方程式\eqref{eq_hn1n2}
は次のように書ける: 
\begin{align*} 
\mat{I_n&0}(H'H^{-1})'&=\mat{-(hn_1')' \T n_1-ch^{-1}ch^{-1}&-cn_1'}
\\
&=\mat{cn_1' \T n_1-ch^{-1}ch^{-1}&-cn_1'}
\\
&=-c\mat{\T n_2&n_1}'. 
\end{align*}
\end{proof}

以上の議論を次のように整理する: 

\[
\begin{array}{ccccc}\hline \\
G(s;B(b,a,c))&\rightarrow&H(s),n_2(s)&\rightarrow
\\ \\
(G'G^{-1})'=0&&(I_n\; 0)(H'H^{-1})'=0
\\ && \\
\D\frac12\tr((G'G^{-1})^2)&&\tr((H'H^{-1})^2)+\tr(H\Lambda)&
\\ \\ \hline
\end{array}
\]
\[
\begin{array}{ccccc}\hline \\
h(s), n_1(s), n_2(s)
\\ \\
\begin{cases}
(h^{\prime} h^{-1})^{\prime}=hn_1'\T n_1'-  {c} h^{-1}{c}h^{-1}
\\
hn_1'-{c}n_1 ={a}
\end{cases}
\\ && \\
\tr(( h' h^{-1})^2)
+2\tr(hn_1' \T n_1')+\tr(h\T mhm)
\\ \\ \hline
\end{array}
\]

}

\memo
{

$\cN(J)\times \Omega_0(J)$の測地線の方程式
\begin{align*}
\cL&=\frac12\tr((G' G^{-1})^2)
\\
&=\tr(( h' h^{-1})^2)+2\tr(hn_1' \T n_1')+\tr(h\T mhm)
\end{align*}
\[
m=n_1 \T n'_1+n'_2. 
\]
共役座標: 
\begin{align*}
&\pi:=\frac{\partial \cL}{\partial h'}=2h^{-1}h' h^{-1}
&&\Leftrightarrow h'=\frac12h\pi h
\\
&\nu_1:=\frac{\partial \cL}{\partial n_1'}=4hn_1'+2h\T m h n_1
&&\Leftrightarrow n_1'=\frac14 h^{-1}\nu_1 - \frac12 \T mh n_1
\\
&\nu_2:=\frac{\partial \cL}{\partial n_2'}=2hmh
&&\Leftrightarrow n_2'=m-\frac14 n_1 \T \nu_1 h^{-1}
\end{align*}

ラグランジアンの各項は
\begin{align*}
&\tr((h' h^{-1})^2)=\frac14 \tr((h\pi)^2)
\\
&2\tr(hn_1' \cdot \T n_1')=\tr\kakko{h(\frac12 h^{-1}\nu_1- \T m h n_1)(\frac12 \T\nu_1h^{-1}- \T n_1 h m)}
\\
&=\frac18 \tr(\nu_1 \T \nu_1 h^{-1})-\frac12 \tr(\T m hn_1 \T \nu_1) 
+\frac12 \tr(\T m h n_1 \T n_1 hm)
\\
&=\frac18 \tr(\nu_1 \T \nu_1 h^{-1}) 
-\frac14 \tr(h^{-1}\T \nu_2 n_1 \T \nu_1)
\\
&+\frac18\tr(h^{-1}\T\nu_2 n_1 \T n_1 \nu_2 h^{-1})
\\
&\tr(h\T mhm)=\frac14\tr\kakko{\T \nu_2 h^{-1}\nu_2 h^{-1}}
\end{align*}
Legendre変換: 
\begin{align*}
&H=\tr(h'\pi)+\tr(n_1' \T\nu_1)+\tr(n_2'\T\nu_2)-L
\\
&=\frac12\tr((h\pi)^2)+\frac14 \tr(h^{-1}\nu_1 \T\nu_1)-\frac14\tr(h^{-1}\T\nu_2 n_1 \T \nu_1)
\\
&+\frac12 \tr(h^{-1}\nu_2 h^{-1}\T\nu_2)-\frac14\tr(n_1\T\nu_1 h^{-1}\T\nu_2)
\\
&-\frac14 \tr((h\pi)^2)-\frac18\tr(\nu_1\T\nu_1 h^{-1})+\frac14\tr(h^{-1}\T\nu_2 n_1 \T \nu_1)
\\
&-\frac18\tr(h^{-1}\T\nu_2 n_1\T n_1 \nu_2 h^{-1})-\frac14 \tr(\T\nu_2 h^{-1}\nu_2 h^{-1})
\\
&=\frac14 \tr((h\pi)^2)+\frac18\tr(\nu_1\T\nu_1 h^{-1})
+\frac14 \tr(h^{-1}\nu_2 h^{-1}\T\nu_2)
\\
&-\frac14\tr(n_1\T\nu_1 h^{-1}\T\nu_2)+\frac14\tr(h^{-1}\T\nu_2 n_2 \nu_2 h^{-1})
\end{align*}

$\Sym^+_n(\bbR)\times \bbR^{n\times r}$の測地線の方程式 \\
(共役座標を積分定数で置き換えられる場合)
\[
\cL=\T \mu' \sigma^{-1} \mu' +\frac12 \tr((\sigma'\sigma^{-1})^2)
\]
共役座標: 
\begin{align*}
&\nu:=\frac{\partial \cL}{\partial \mu'}=2\sigma^{-1}\mu'
\\
&\pi:=\frac{\partial \cL}{\partial \sigma'}=\sigma^{-1}\sigma'\sigma^{-1}
\\
&H(\mu, \sigma, \nu, \pi)=L=\frac14 \T \nu\sigma\nu+\frac12\tr((\pi\sigma)^2)
\end{align*}
正準方程式は
\begin{eqnarray*}
&&\mu'=\frac{\partial H}{\partial \nu}=\frac12 \sigma\nu
\\
&&\nu'=-\frac{\partial H}{\partial \mu}=0
\\
&&\sigma'=\frac{\partial H}{\partial \pi}=\sigma\pi\sigma
\\
&&\pi'=-\frac{\partial H}{\partial \sigma}=-\pi \sigma\pi-\frac14 \nu\T\nu
\end{eqnarray*}
これより
\[
(\sigma'\sigma^{-1})'=-a\T a \sigma^{-1}. 
\]
ただし
\[
\nu=2\sigma^{-1}\mu'=:a\; (定数). 
\]
ラグランジアンになおすと, 
\begin{align*}
L(\sigma)&=\tr\kakko{\pi \frac{\partial H}{\partial \pi}}+\tr\kakko{\nu \frac{\partial H}{\partial \nu}}-H
\\
&=\frac12 \tr((\sigma'\sigma^{-1})^2)-\T a\sigma a
\end{align*}

}

\clearpage

\section{Expression in symmetric domains}

Considering a realization of $\Omega(J)$ as a symmetric domain, 
we express
\begin{itemize}
\item
the condition $c=0$ for $B=B(b,a,c)$, 
\item
the variable change $G\rightarrow h$ by the block-Gauss decomposition
\end{itemize}
in Proposition \ref{prop_3} in terms of the structure of symmetric domains.

\subsection{Siegel domain of type BDI}

\paragraph{Siegel domain $D(J)$ of type BDI}
For the set
\begin{align*}
&M=\ckakko{U=\mat{U_1 \\ U_2 \\ U_3}\in \rM_{2n+r,n}(\bbR);\; \rank U=n}, 
\end{align*}
we consider the group actions
\[
\GL(2n+r,\bbR) \curvearrowright M \curvearrowleft \GL(n,\bbR)
\]
by the matrix multiplication. For the quatient
\[
D:=M/\GL(n,\bbR)=\ckakko{u=[U]=\bmat{U_1 \\ U_2 \\ U_3};\; U\in M}, 
\]
we equip it with the induced topology to consider $D$ as a domain. 
We let
\[
J=-\mat{0&0&I_n \\ 0&I_r&0 \\  I_n&0&0}
\]
and define a subdomain $D(J)\subset D$ by
\[
M(J)=\ckakko{U\in M;\; \T UJU>0}, 
\]
\[
D(J):=\ckakko{u=[U]\in D;\; U\in M(J)}. 
\]

\begin{lemma}[Realization as a Siegel domain]
The domain $D(J)$ is the following set: 
\[
D(J)= \ckakko{u=\bmat{U_1 \\ U_2 \\ I_n}\in D;\; 
-\T U_1- U_1-\T U_2 U_2>0 }. 
\]
\end{lemma}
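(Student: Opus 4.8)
The plan is to compute the quadratic form $\T U J U$ in block form, show that its positivity forces the bottom block $U_3$ to be invertible, and then exploit the right $\GL(n,\bbR)$-action to normalize $U_3=I_n$. First I would record the elementary block computation
\[
\T U J U=-\kakko{\T U_1 U_3+\T U_3 U_1+\T U_2 U_2},
\]
which follows at once from $JU=-\mat{U_3\\U_2\\U_1}$. I would also note that the defining condition $\T U J U>0$ is invariant under $U\mapsto Ug$ for $g\in\GL(n,\bbR)$, since $\T{(Ug)}J(Ug)=\T g\,(\T U J U)\,g$ is positive definite exactly when $\T U J U$ is; thus $M(J)$ is stable under the right $\GL(n,\bbR)$-action and $D(J)$ is well defined as a subset of $D$.

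The crucial step, which I expect to be the only real obstacle, is to show that $\T U J U>0$ implies $U_3\in\GL(n,\bbR)$. I would argue by contradiction: if $U_3 v=0$ for some nonzero $v\in\bbR^n$, then the two terms of the block formula containing $U_3$ annihilate $v$, and pairing against $v$ gives
\[
\T v\,(\T U J U)\,v=-\norm{U_2 v}^2\le 0,
\]
contradicting positive-definiteness. Hence $U_3$ is invertible, and in particular $\rank U=n$ holds automatically for any such $U$.

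With $U_3$ invertible I would pass to the canonical representative by replacing $U$ with $UU_3^{-1}$, so that, after renaming the top two blocks, every class in $D(J)$ has a representative $u=\bmat{U_1\\U_2\\I_n}$. Substituting $U_3=I_n$ into the block formula yields $\T U J U=-(\T U_1+U_1+\T U_2 U_2)$, and therefore $\T U J U>0$ reads precisely $-\T U_1-U_1-\T U_2 U_2>0$, giving one inclusion. For the converse, any $U=\bmat{U_1\\U_2\\I_n}$ satisfying this inequality has rank $n$ and $\T U J U>0$, hence $U\in M(J)$ and $[U]\in D(J)$. Combining the two inclusions identifies $D(J)$ with the asserted set; apart from the invertibility of $U_3$, every step is a direct substitution.
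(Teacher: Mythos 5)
Your proof is correct. Note that the paper states this lemma without any proof, so there is nothing to compare against; your argument supplies the standard missing details. The only step requiring a genuine idea---that $\T U J U>0$ forces $U_3\in\GL(n,\bbR)$---is handled exactly right: pairing with a nonzero $v\in\ker U_3$ kills both cross terms and yields $\T v(\T U J U)v=-\norm{U_2v}^2\le 0$, contradicting positive definiteness; after that, congruence invariance under the right $\GL(n,\bbR)$-action and the normalization $U\mapsto UU_3^{-1}$ give both inclusions, and you have correctly used this section's sign convention $J=-\mat{0&0&I_n\\0&I_r&0\\I_n&0&0}$ (opposite to the $J$ of Section 2), which is precisely what produces the inequality $-\T U_1-U_1-\T U_2U_2>0$.
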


We define a subdomain $\Sigma \subset D$ (the Shilov boundary of $D(J)$) by
\[
\Sigma= \ckakko{u=\bmat{U_1 \\ U_2 \\ I_n}\in D;\; 
-\T U_1- U_1-\T U_2 U_2=0 }. 
\]

We take two points $u_0, w\in D$ as
\begin{align*}
&u_0=[U_0]\in D(J), \quad w=[W]\in \Sigma, 
\\
&U_0=\mat{-I_n \\ 0 \\ I_n}, \quad W=\mat{0 \\ 0 \\ I_n}. 
\end{align*}

For two points $u_1, u_2 \in D(J)$ expressed as
\[
u_j=[U_j], \; U_j =\mat{U_{j,1} \\ U_{j,2} \\ U_{j,3}} \quad (j=1,2), 
\]
we define $\Delta(u_1,u_2) \in \rM_n(\bbR)$ by
\begin{align*}
&\Delta(u_1,u_2):=\T (U_1 \Gamma_1^{-1})J(U_2\Gamma_2^{-1}), 
\end{align*}
\[
\Gamma_j=\T WU_j=U_{j,3}\quad (j=1,2). 
\]

\paragraph{Expression of $D(J)$ as a symmetric space $G/K$}
Consider the action $\cG(J) \curvearrowright D(J)$ defined by
\[
g\cdot u=[gU] \quad (g\in \cG(J), u=[U]\in D(J)). 
\]
We also define the following subgroups: 
\begin{align*}
&\cP(J)=\ckakko{g\in \cG(J);\; g=\mat{*&*&* \\ 0&*&* \\ 0&0&*}}, 
\\
&\cK(J)=\ckakko{g\in \cG(J);\; \T g=g^{-1}}\cong\rO(n+r)\times \rO(n), 
\\
&\cG(J)_{u_0}=\ckakko{g\in \cG(J);\; gu_0=u_0}. 
\end{align*}

\begin{lemma}
The following assertions hold: 
\begin{enumerate}
\item
$\cG(J)=\cP(J)\cK(J)$. 
\item
$\cG(J)_{u_0}=\cK(J)$. 
\item
The action $\cG(J)\curvearrowright D(J)$ is transitive, and it holds that
\[
D(J)
\cong \SO_0(n+r,n)/\SO(n+r)\times \SO(n). 
\]
\end{enumerate}
\end{lemma}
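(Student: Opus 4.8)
The plan is to read $\cG(J)$ as the isometry group of the symmetric form with Gram matrix $J$: since $J^2=I$, the condition $JGJ=\T G^{-1}$ is equivalent to $\T G J G=J$, so $\cG(J)=\rO(J)$. Being symmetric with $J^2=I$, the matrix $J$ splits $\bbR^{2n+r}=V_+\oplus V_-$ into its $(\pm1)$-eigenspaces, with $\dim V_+=n$ and $\dim V_-=n+r$ (the signature $(n,n+r)$ of $J$). The condition $\T UJU>0$ says precisely that $\mathrm{col}(U)$ is a positive-definite $n$-plane for $J$, so $D(J)$ is the Grassmannian of maximal positive-definite subspaces, and the computation $JU_0=U_0$ (together with $\dim V_+=n$) identifies $u_0$ with $V_+$ itself. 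With this picture I would prove (2) first. For $g\in\cG(J)$, fixing $u_0$ means $gU_0=U_0\gamma$ for some $\gamma\in\GL(n,\bbR)$, i.e. $g(V_+)=V_+$; since $g$ preserves $J$ it preserves $J$-orthogonal complements, so $g(V_+)=V_+$ forces $g(V_-)=g(V_+^{\perp})=V_-$, hence $g$ commutes with $J$. Then $\T gJg=J$ and $gJ=Jg$ give $\T gg=I$, i.e. $g\in\rO(2n+r)$, which is exactly $\cK(J)$; the reverse implications hold verbatim, giving $\cG(J)_{u_0}=\cK(J)$. The same bookkeeping exhibits $\cK(J)$ acting as $\rO(V_+)\times\rO(V_-)\cong\rO(n)\times\rO(n+r)$, recovering the stated isomorphism type.

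For (1), I would note that $\cP(J)$ is the stabilizer in $\cG(J)$ of the standard isotropic flag $V_1\subset V_1^{\perp}$, where $V_1=\langle e_1,\dots,e_n\rangle$ is a maximal isotropic subspace; thus $\cG(J)/\cP(J)$ is the corresponding real flag variety. The factorization $\cG(J)=\cK(J)\cP(J)$ is then the statement that the maximal compact $\cK(J)$ acts transitively on this flag variety, equivalently the Iwasawa decomposition $\cG(J)=\cK(J)\,AN$ together with $AN\subset\cP(J)$. Applying $g\mapsto g^{-1}$ and using that both $\cP(J)$ and $\cK(J)$ are stable under inversion (the inverse of an invertible block upper-triangular matrix is again block upper-triangular) yields $\cG(J)=\cP(J)\cK(J)$. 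Alternatively one can give an elementary $J$-adapted Gram–Schmidt (pseudo-$QR$) argument producing $p\in\cP(J)$, $k\in\cK(J)$ with $g=pk$ directly.

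Finally, (3) combines transitivity with (2). Transitivity of $\cG(J)\cong\rO(n+r,n)$ on the maximal positive-definite $J$-subspaces is Witt's theorem: any two such $n$-planes admit $J$-orthonormal bases, as do their negative-definite complements, so a $J$-isometry matching the bases carries one to the other. (One may instead argue $\cG(J)u_0=\cP(J)\cK(J)u_0=\cP(J)u_0=D(J)$ directly from (1) and (2).) Orbit–stabilizer together with (2) then gives $D(J)\cong\cG(J)/\cK(J)$. To sharpen to identity components I would first check that $D(J)$ is connected, which is immediate from the Siegel realization since $-\T U_1-U_1-\T U_2U_2>0$ cuts out a connected (convex-fibered) domain; then the identity component $\SO_0(n+r,n)$ already acts transitively, with stabilizer $\SO_0(n+r,n)\cap\cK(J)=\SO(n+r)\times\SO(n)$, giving $D(J)\cong\SO_0(n+r,n)/\SO(n+r)\times\SO(n)$.

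I expect the main obstacle to be part (1): making the $\cP(J)\cK(J)$ factorization rigorous without merely invoking Iwasawa requires a careful $J$-orthogonalization that simultaneously respects the isotropic flag and the indefinite form. A secondary point demanding care is the component bookkeeping in (3), namely verifying that restricting to $\SO_0(n+r,n)$ cuts $\cK(J)\cong\rO(n+r)\times\rO(n)$ down to exactly $\SO(n+r)\times\SO(n)$; this rests on the connectedness of $D(J)$ and the standard description of the maximal compact of $\SO_0(p,q)$.
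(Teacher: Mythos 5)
Your proposal is correct, and it is worth noting that the paper itself only proves assertion 2; assertions 1 and 3 are stated without proof. For assertion 2 your argument is essentially the paper's, written invariantly: the paper conjugates by the explicit matrix $A=\frac{1}{\sqrt{2}}\mat{I_n&0&-I_n \\ 0&\sqrt{2}I_r&0 \\ I_n&0&I_n}$, which diagonalizes the form ($K=AJ\T A=\mat{I_n&0\\0&-I_{n+r}}$) and sends $u_0$ to $v_0=\bmat{I_n\\0\\0}$, then observes that the $\GL$-stabilizer of $v_0$ consists of block upper triangular matrices, so that inside $\cG(K)$ it consists exactly of the orthogonal elements. Your version decomposes $\bbR^{2n+r}$ into the $\pm1$-eigenspaces $V_\pm$ of $J$ and notes that fixing $u_0=V_+$ forces $g$ to preserve $V_-=V_+^{\perp_J}$ as well, hence to commute with $J$, hence $\T g g=I$; this is the same mechanism without the change of basis (the paper's $A$ is precisely the map taking the eigenbasis of $J$ to the standard one), and each buys the same thing. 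For assertions 1 and 3, which the paper leaves unproven, your arguments --- Iwasawa decomposition $\cG(J)=\cK(J)AN$ with $AN\subset\cP(J)$ for 1, and Witt's theorem plus connectedness of the Siegel realization plus the component count $\SO_0(n+r,n)\cap\cK(J)=\SO(n+r)\times\SO(n)$ for 3 --- are standard and correct, and genuinely fill a gap in the paper's exposition. One small imprecision: your parenthetical alternative for transitivity, $\cG(J)u_0=\cP(J)\cK(J)u_0=\cP(J)u_0=D(J)$, does not follow from (1) and (2) alone, since the last equality $\cP(J)u_0=D(J)$ is itself the (easy, affine) transitivity of the parabolic on the Siegel domain and needs its own computation; your primary Witt-theorem argument does not have this issue.
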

\begin{proof}
{\it 2}. 
Let
\[
A=\frac{1}{\sqrt{2}}\mat{I_n&0&-I_n \\ 0&\sqrt{2}I_r&0 \\ I_n&0&I_n}, 
\]
and
\[
v_0:=Au_0=\bmat{ I_n \\ 0 \\ 0}, \quad
K:=AJ\T A=\mat{I_n&0 \\ 0&-I_{n+r}},
\]
\[
\cG(K)= A\cG(J) \T A. 
\]
For $g\in \GL(2n+r,\bbR)$, we see the equivalence
\begin{align*}
gv_0=v_0 \Leftrightarrow g=\mat{* &*\\ 0& *}. 
\end{align*}
Thus, for $g\in \cG(K)$, it holds the equivalence
\begin{align*}
gv_0=v_0 \Leftrightarrow  g=\T g^{-1}. 
\end{align*}
Therefore, for $g\in \cG(J)$, 
\begin{align*}
gu_0=u_0 \Leftrightarrow  g=\T g^{-1}. 
\end{align*}

\end{proof}

\paragraph{Isomorphism $\cF: \Omega(J)\xrightarrow{\sim} D(J)$}
We define a map $\cF: \Omega(J)\xrightarrow{} D$ by
\[
\cF(G)=\bmat{ G_{3}-I_n \\ G_{2} \\ h}, \quad
G=\mat{h &*&* \\ G_2&*&* \\ G_3&*&*}\in \Omega(J), 
\]
and a map $\pi: \Omega(J)\rightarrow\Sym^+_n(\bbR)$ by
\[
\pi(G)=h, \quad
G=\mat{h &*&* \\ G_2&*&* \\ G_3&*&*}\in \Omega(J). 
\]
We define an action $\cG(J)\curvearrowright \Omega(J)$ by
\[
g\cdot G=\T g^{-1}Gg^{-1}\quad (g\in \cG(J), G\in \Omega(J)). 
\]

\begin{lemma}
\begin{enumerate}
\item
For any $g\in \cG(J), G\in \Omega(J)$, it holds that
\[
\cF(g\cdot G)=g\cF(G). 
\]
\item
It holds that $\cF(\Omega(J))\subset D(J)$, and $\cF: \Omega(J)
\rightarrow D(J)$ is bijective. 
\item
For any $G\in \Omega(J)$, it holds that
\[
\frac12\pi(G)=\Delta(\cF(G),\cF(G))^{-1}. 
\]
\end{enumerate}
\end{lemma}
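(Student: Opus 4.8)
The plan is to reinterpret the entire statement through the Grassmannian picture of $D$. Since a point $[U]\in D$ is nothing but the $n$-dimensional column space $\mathrm{colspan}(U)\subset\bbR^{2n+r}$, I would identify $D$ with $\Gr(n,\bbR^{2n+r})$, under which $g\cdot[U]=[gU]$ becomes $V\mapsto gV$. Here $J$ denotes the permutation matrix of the $\Omega(J)$ definition; the form cutting out $D(J)$ and entering $\Delta$ is its negative, so that $D(J)$ consists of those $V$ with $J|_V<0$. The key observation I would establish first is that for $G\in\Omega(J)$ the matrix $P_G:=JG$ is a $J$-self-adjoint involution: $\T P_G J=J P_G$ holds because $\T G=G$, while $P_G^2=JGJG=I$ is exactly the defining relation $JGJ=G^{-1}$. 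Writing $\varepsilon=\mat{I_n\\0\\0}$, so that $\cF(G)$ is represented by $\tilde U=JG\varepsilon-\varepsilon=(P_G-I)\varepsilon$, I would then show $\mathrm{colspan}(\tilde U)=E_-(P_G)$, the $(-1)$-eigenspace of $P_G$: the image of $P_G-I$ is $E_-(P_G)$, and $P_G-I$ is injective on $\mathrm{colspan}(\varepsilon)$ because $v=\mat{x\\0\\0}\in\ker(P_G-I)$ forces $G_{11}x=0$, hence $x=0$ since $h=G_{11}>0$. Thus $\cF(G)=E_-(JG)$ as a point of the Grassmannian.

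Once this is in hand, the three assertions fall out quickly. For the first, the action $g\cdot G=\T g^{-1}Gg^{-1}$ together with $J\T g^{-1}=gJ$ (equivalent to $g\in\cG(J)$) gives $P_{g\cdot G}=J\T g^{-1}Gg^{-1}=gJGg^{-1}=gP_Gg^{-1}$, so $E_-(P_{g\cdot G})=gE_-(P_G)$, which is precisely $\cF(g\cdot G)=g\cF(G)$. For the positivity statements I would use the single identity $GJG=J$ (the symmetric form of the defining relation). On $E_-(P_G)$ one has $Gv=-Jv$, whence $\T vJv=-\T vGv<0$; likewise $\T vJv>0$ on $E_+(P_G)$. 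Because the signature of $J$ is $(n+r,n)$ and $E_+\oplus E_-=\bbR^{2n+r}$ is a $J$-orthogonal decomposition, Sylvester's law forces $\dim E_-=n$ and shows $\cF(G)=E_-(P_G)$ lies in $D(J)$, giving the containment in the second assertion. For the third I would compute $\Gamma=\T W\tilde U=h$ and $\T\tilde U J\tilde U=-2h$ (using $GJG=J$, $\T\varepsilon J\varepsilon=0$, $\T\varepsilon G\varepsilon=h$); since $\Delta$ is formed with the domain's form $-J$ and $h$ is symmetric, $\Delta(\cF(G),\cF(G))=-h^{-1}(-2h)h^{-1}=2h^{-1}$, i.e. $\Delta(\cF(G),\cF(G))^{-1}=\tfrac12 h=\tfrac12\pi(G)$.

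For the bijectivity I would exhibit the inverse directly. Given $V\in D(J)$, i.e. an $n$-plane with $J|_V<0$, nondegeneracy gives the $J$-orthogonal splitting $\bbR^{2n+r}=V\oplus V^{\perp_J}$ with $\dim V^{\perp_J}=n+r$; let $P$ be the involution equal to $-I$ on $V$ and $+I$ on $V^{\perp_J}$, and set $G:=JP$. Then $G$ is symmetric (as $P$ is $J$-self-adjoint), satisfies $JGJ=G^{-1}$ automatically from $P^2=I$, and $\cF(G)=E_-(JG)=E_-(P)=V$; injectivity is immediate since $E_-(JG)$ determines $P$ (its $+1$-eigenspace being the $J$-orthogonal complement) and hence $G=JP$. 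The one genuinely substantive point — and the step I expect to be the main obstacle — is verifying $G>0$, i.e. that $G$ really lands in $\Omega(J)$: decomposing $v=v_++v_-$ along $E_+\oplus E_-$ one gets $\T vGv=\T v_+Jv_+-\T v_-Jv_-$, and this is positive precisely because $J|_V<0$ together with the signature $(n+r,n)$ forces $J|_{V^{\perp_J}}>0$. As a cross-check, bijectivity can alternatively be deduced from the first assertion by noting $\cF(I)=u_0$ and that $\cG(J)$ acts transitively on both $\Omega(J)$ (via the block-Gauss decomposition) and $D(J)$, with common stabilizer $\cK(J)$.
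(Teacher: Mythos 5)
Your proposal is correct, but it takes a genuinely different route from the paper's. The paper proves assertion 1 by working at the base point first: it invokes the decomposition $\cG(J)=\cP(J)\cK(J)$, verifies $\cF(p\cdot I_{2n+r})=p\,u_0$ by an explicit block computation for upper-triangular $p\in\cP(J)$, and then propagates to general $G$ by writing $G=g_0\cdot I_{2n+r}$; assertion 2 is then deduced formally from transitivity of $\cG(J)$ on $D(J)$ together with the coincidence of both stabilizers with $\cK(J)$, and assertion 3 is a direct block computation using the relation $hG_3+\T G_3h+\T G_2G_2=0$. You instead identify $\cF(G)$ intrinsically as the $(-1)$-eigenspace of the $J$-self-adjoint involution $P_G=JG$, via $\tilde U=(P_G-I)\varepsilon$. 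This makes assertion 1 the one-line covariance $P_{g\cdot G}=gP_Gg^{-1}$, derives the containment $\cF(\Omega(J))\subset D(J)$ and the dimension count $\dim E_-(P_G)=n$ from definiteness of $J$ on the eigenspaces plus Sylvester's law, produces an explicit inverse $V\mapsto JP_V$ for bijectivity (something the paper never constructs, since it argues through the group action), and reduces assertion 3 to the identity $\T(P_G-I)J(P_G-I)=2(J-G)$; your sign bookkeeping between the $J$ of $\Omega(J)$ and the $-J$ that defines $D(J)$ and $\Delta$ is handled correctly throughout. What each approach buys: the paper's fits its homogeneous-space framework and reuses lemmas it has already established, whereas yours is self-contained, bypasses the $\cP(J)\cK(J)$ decomposition and transitivity entirely, and makes positivity and bijectivity conceptually transparent (it also fills in, rather than inherits, the implicit step that $\cF$ actually lands in $D(J)$). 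One small ordering point: the equality $\mathrm{colspan}(\tilde U)=E_-(P_G)$ claimed in your first paragraph already requires $\dim E_-(P_G)=n$, which you only establish in the second paragraph via the signature argument; all the pieces are present, but they should be assembled in that order.
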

By this lemma, we obtain the following commutative diagram: 
\[
\begin{xy}
\xymatrix{
\Omega(J) \ar[r]^-{\cF} \ar[dr]^{\circlearrowright}_-{\frac12\pi}& D(J) \ar[d]^-{\Delta(\cdot,\cdot)^{-1}}
\\
&\Sym^+_n(\bbR)
}
\end{xy}
\]
\begin{proof}
{\it 1}. 
First, we prove for the case $G=I_{2n+r}$. 
Take an arbitrary $g\in \cG(J)$ and express it as
\[
g=pk, \quad p\in \cP(J), \; k\in \cK(J)
\]
\[
p=\mat{p_1&-p_2&\T p_3 \\ 0&I_r&\T p_2 \\ 0&0&\T p_1^{-1}}. 
\]
Then, from the relation $\T p^{-1}=JpJ$ we get
\begin{align*}
p\cdot I_{2n+r}&=\T p^{-1}p^{-1}
\\
&=\mat{\T p_1^{-1}&0&0 \\ \T p_2&I_r&0 \\ \T p_3&-p_2& p_1}
\mat{ p_1^{-1}&p_2& p_3 \\ 0&I_r&-\T p_2 \\ 0&0& \T p_1}
\\
&=\mat{
\T p_1^{-1}p_1^{-1} &*&* \\ 
\T p_2  p_1^{-1}&*&* \\ 
\T p_3 p_1^{-1}&*&*}. 
\end{align*}
Therefore, its image under the map $\cF$ is
\begin{align*}
\cF(p\cdot I_{2n+r})&=\bmat{\T p_3 p_1^{-1}-I_n\\
\T p_2  p_1^{-1} \\
\T p_1^{-1}p_1^{-1}}
=\bmat{\T p_3 -p_1 \\
\T p_2    \\
\T p_1^{-1}}. 
\end{align*}
On the other hand, we see that
\begin{align*}
p\cF(I_{2n+r})=p\bmat{-I_n \\ 0 \\ I_n}
=\bmat{-p_1 +\T p_3  \\
\T p_2    \\
\T p_1^{-1}}. 
\end{align*}
It shows that $\cF(p\cdot I_{2n+r})=p\cF(I_{2n+r})$. Therefore, 
\begin{align*}
&\cF(g\cdot I_{2n+r})=\cF(p\cdot I_{2n+r})
\\
&=pu_0=pku_0=g\cF(I_{2n+r}). 
\end{align*}

Next, we show the assertion for the general case $g\in \cG(J), G\in \Omega(J)$. 
We express $G$ as
\[
G=g_0\cdot I_{2n+r} , \quad g_0\in \cG(J). 
\]
Then, we see that
\begin{align*}
&\cF(g\cdot G)=\cF(gg_0\cdot I_{2n+r})
=gg_0\cF( I_{2n+r})
\\
&=g\cF(g_0 \cdot I_{2n+r})
=g\cF(G). 
\end{align*}

{\it 2}. 
The equation $\cF(\Omega(J))=D(J)$ is follows
from {\it 1} and the transitivity of the action of $\cG(J)$. 
The injectivity of $\cF$ holds because
the isotropy subgroups are coincident to be $\cK(J)$.

{\it 3}. 
Take an arbitrary $G\in \Omega(J)$ and express it as
\[
G=\mat{h &*&* \\ G_2&*&* \\ G_3&*&*}, \quad 
hG_3 +\T G_3 h + \T G_2 G_2=0. 
\]
Thus, its image is
\begin{align*}
\cF(G)=[U], \quad 
U=\mat{U_1 \\ U_2 \\ U_3}=\mat{ G_{3}-I_n \\ G_{2} \\ h}. 
\end{align*}
Setting $\Gamma:=h$, we get 
\begin{align*}
&\Delta(\cF(G), \cF(G))=\T (U\Gamma^{-1})J(U\Gamma^{-1})
\\
&=-\T(G_3 h^{-1}-h^{-1})-(G_3 h^{-1}-h^{-1})-h^{-1} \T G_2 G_2 h^{-1} 
\\
&=2h^{-1} =2\pi(G)^{-1}. 
\end{align*}
\end{proof}

\paragraph{Characterization of $B=B(b,a,0)$}
\begin{align*}
&\fkg(J)=\ckakko{X\in \gl(2n+r,\bbR);\; XJ +J\T X=0}\cong \so(n+r,n), 
\\
&\fkp(J)=\ckakko{X\in \fkg(J);\; X=\T X}. 
\end{align*}
An arbitrary element $\fkp(J)$ has the form
\[
B=B(b,a,c):=\mat{b&a&c \\  \T a&0&-\T a \\ \T c&-a&-b}, \quad
\begin{array}{l}
b\in \Sym_n(\bbR), \\ a\in \mathrm{M}_{n,r}(\bbR), \\ c\in \mathrm{Alt}_n(\bbR)
\end{array}. 
\]
Let $w\in \Sigma$ as above, 

\begin{lemma}
For an element $B(b,a,c)\in \fkp(J)$, it holds the following Taylor expansion: 
\[
\Delta(w, e^{sB}w)=-cs+\kakko{a\T a-bc-cb}\frac{s^2}{2}+O(s^3). 
\]
\end{lemma}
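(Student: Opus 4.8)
The statement is a second-order Taylor expansion, so my plan is a direct computation of $\Delta(w,e^{sB}w)$ to order $s^2$, exploiting the special forms of $W$ and $J$ to cut the work down. Since $B\in\fkp(J)\subset\fkg(J)$ we have $e^{sB}\in\cG(J)$, so the action gives $e^{sB}w=[e^{sB}W]$, and I would take $U_1=W$ (representing $w$) and $U_2=e^{sB}W$ as representatives. Because $\Delta(u_1,u_2)$ is unchanged under $U_j\mapsto U_jg$ with $g\in\GL(n,\bbR)$, thanks to the normalization by $\Gamma_j=\T WU_j$, this choice is legitimate even though $w\in\Sigma$ lies on the Shilov boundary; all that is needed is that $\Gamma_1,\Gamma_2$ stay invertible for small $s$, which holds since $\Gamma_1=\T WW=I_n$.

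The first key simplification is structural. From $\Gamma_1=I_n$ we get $U_1\Gamma_1^{-1}=W$, and a short block computation gives $\T WJ=\mat{-I_n&0&0}$ in the block sizes $(n,r,n)$. Hence
\[
\Delta(w,e^{sB}w)=\T W J\,\kakko{U_2\Gamma_2^{-1}}=-\kakko{U_2\Gamma_2^{-1}}_1,
\]
the negative of the top $n\times n$ block of $U_2\Gamma_2^{-1}$. Thus I only ever need two blocks of $e^{sB}W$: its top block $U_{2,1}$ and its bottom block $\Gamma_2=U_{2,3}$.

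Next I would expand $e^{sB}W=W+sBW+\tfrac{s^2}{2}B^2W+O(s^3)$ by block multiplication against $W$. A direct product gives $BW=\mat{c\\-\T a\\-b}$ and then, applying $B$ once more, the top and bottom blocks of $B^2W$, from which
\[
U_{2,1}=sc+\tfrac{s^2}{2}\kakko{bc-a\T a-cb}+O(s^3),\qquad \Gamma_2=I_n-sb+O(s^2).
\]
Inverting via the Neumann series yields $\Gamma_2^{-1}=I_n+sb+O(s^2)$; note that, since $U_{2,1}$ already starts at order $s$, the $s^2$-term of $\Gamma_2$ never contributes to the order-$s^2$ part of the product, a welcome saving. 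Multiplying out $U_{2,1}\Gamma_2^{-1}$ and collecting terms gives $sc+\tfrac{s^2}{2}\kakko{cb+bc-a\T a}+O(s^3)$, and applying the overall sign from the previous display produces exactly
\[
\Delta(w,e^{sB}w)=-cs+\kakko{a\T a-bc-cb}\tfrac{s^2}{2}+O(s^3).
\]

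This argument is essentially routine, so the real care lies not in any single hard step but in bookkeeping. The point most likely to cause an error is the sign convention: the $J$ used here for the Siegel domain carries an overall minus sign relative to the $J$ of the $\Omega(J)$ section, and one must read off correctly that $\T WJ$ annihilates the middle and bottom blocks while sending the top block to $-I_n$. The secondary subtlety is justifying that $\Delta$ depends only on the equivalence classes $[U_j]$, so that the representatives $W$ and $e^{sB}W$ may be used unchanged; once that and the block form of $\T WJ$ are in hand, the remaining expansion is mechanical.
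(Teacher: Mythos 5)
Your proof is correct and takes essentially the same route as the paper's: both reduce $\Delta(w,e^{sB}w)$ to the product $\bigl(\T W J e^{sB}W\bigr)\bigl(\T W e^{sB}W\bigr)^{-1}$ (your $-U_{2,1}\Gamma_2^{-1}$ is exactly the paper's $\Phi(s)\Gamma(s)^{-1}$) and obtain the result by expanding $e^{sB}W$ blockwise to second order and inverting $\Gamma$ by a Neumann series. The only difference is that you make explicit the block computations, the sign convention for $J$, and the well-definedness of $\Delta$ on classes, which the paper compresses into ``by the definition, it is easily seen.''
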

\begin{proof}
Let
\[
\Gamma_1=\T W W=I_n, \quad \Gamma(s)=\T We^{sB}W, 
\]
\[ 
\Phi(s)=\T WJe^{sB}W. 
\]
Then, we have
\begin{align*}
\Delta(w, e^{sB}w)&=\T (W\Gamma_1^{-1})J(e^{sB}W\Gamma(s)^{-1})
\\
&=\Phi(s)\Gamma(s)^{-1}. 
\end{align*}
By the definition, it is easily seen that
\begin{align*}
&\Phi(s)=-cs+ (a\T a-bc+cb) \frac{s^2}{2}+O(s^3), 
\\
&\Gamma(s)^{-1}=I_n+bs+O(s^2). 
\end{align*}
Then, we obtain the expansion 
\[
\Delta(w, e^{sB}w)=-cs+\kakko{a\T a-bc-cb}\frac{s^2}{2}+O(s^3). 
\]
\end{proof}

\begin{proposition}
For $B\in \fkp(J)$, assume the Taylor expansion 
\[
\Delta(w, e^{sB}w)=(a\T a) \frac{s^2}{2} +O(s^3), \quad a\in \rM_{n,r}(\bbR)
\]
at $s=0$. 
Define a $\Sym^+_n(\bbR)$-valued function $h(s)$ by
\[
h(s)=\Delta(e^{sB}u_0, e^{sB}u_0)^{-1}. 
\]
Then, $h(s)$ satisfies the matrix-valued Bratu equation; 
\[
(h'h^{-1})'=2(a\T a) h^{-1}. 
\]
\end{proposition}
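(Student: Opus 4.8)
The plan is to reduce the statement to the exponential-matrix solution of Proposition \ref{prop_3} by transporting the geometric data through the isomorphism $\cF$. First I would read off from the preceding Taylor-expansion lemma, which gives $\Delta(w,e^{sB}w)=-cs+(a\T a-bc-cb)\frac{s^2}{2}+O(s^3)$, that the hypothesis $\Delta(w,e^{sB}w)=(a\T a)\frac{s^2}{2}+O(s^3)$ is precisely the vanishing of the linear coefficient; that is, the assumption is equivalent to $c=0$, so $B=B(b,a,0)$ and Proposition \ref{prop_3} applies.

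Next I would build the bridge between $h(s)$ and the $(1,1)$-block map $\pi$. Since $B={}^tB$ and $JBJ=-B$, the curve $e^{sB}$ lies in $\cG(J)$, and a direct computation with the action $g\cdot G={}^tg^{-1}Gg^{-1}$ gives $e^{sB}\cdot I_{2n+r}=e^{-sB}e^{-sB}=e^{-2sB}$, which lies in $\Omega(J)$ because $Je^{-2sB}J=e^{2sB}=(e^{-2sB})^{-1}$. Applying the equivariance $\cF(g\cdot G)=g\,\cF(G)$ with $\cF(I_{2n+r})=u_0$ then yields $\cF(e^{-2sB})=e^{sB}u_0$. Feeding this into the identity $\tfrac12\pi(G)=\Delta(\cF(G),\cF(G))^{-1}$ of the commutative-diagram lemma produces the key relation
\[
h(s)=\Delta(e^{sB}u_0,e^{sB}u_0)^{-1}=\tfrac12\,\pi(e^{-2sB}).
\]

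Then I would connect $\pi$ to the block-Gauss variable of Proposition \ref{prop_3}. Writing $G=NA\T N$ and reading off the top-left block shows that the $(1,1)$-block of $G$ equals the block-Gauss component $h$, so $\pi$ agrees with that variable change; hence $\pi(e^{-2sB})=\tilde h(-2s)$, where $\tilde h$ is the Proposition \ref{prop_3} solution with $c=0$, satisfying $(\tilde h'\tilde h^{-1})'=(a\T a)\tilde h^{-1}$. With $h(s)=\tfrac12\tilde h(-2s)$ and $t=-2s$, the chain rule gives $h'h^{-1}=-2(\tilde h'\tilde h^{-1})$ and therefore $(h'h^{-1})'=4(\tilde h'\tilde h^{-1})'=4(a\T a)\tilde h^{-1}$; substituting $\tilde h^{-1}=\tfrac12 h^{-1}$ collapses the constants to $(h'h^{-1})'=2(a\T a)h^{-1}$, as claimed.

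I expect the only delicate point to be the bookkeeping of the factor $-2$ in $e^{sB}\cdot I_{2n+r}=e^{-2sB}$ and its propagation through the rescaling $h(s)=\tfrac12\tilde h(-2s)$: the sign is harmless because $(h'h^{-1})'$ is invariant under $s\mapsto-s$, but the two factors of $2$ must combine correctly to produce the coefficient $2$ on the right-hand side rather than $1$. Everything else is a formal consequence of the equivariance of $\cF$, the block identification $\pi(G)=h$, and the already-established Proposition \ref{prop_3}.
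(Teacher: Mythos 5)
Your proposal is correct and follows essentially the same route as the paper's own proof: identify the hypothesis with $c=0$ via the Taylor-expansion lemma, use $\cF^{-1}(e^{sB}u_0)=e^{-2sB}$ together with $\tfrac12\pi(G)=\Delta(\cF(G),\cF(G))^{-1}$ to reduce to Proposition \ref{prop_3}, and track the factors of $2$ through the rescaling. The only difference is that you explicitly verify the steps the paper leaves implicit (the equivariance computation $e^{sB}\cdot I_{2n+r}=e^{-2sB}$, the identification of $\pi$ with the block-Gauss variable, and the chain-rule bookkeeping behind the coefficient $4$), which is a welcome expansion rather than a deviation.
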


\begin{proof}
From the assumption, the element $B$ has the form $B=B(b,a,0)$. 
Since
\[
\cF^{-1}(e^{sB}u_0)=e^{-2sB}=:G(s), 
\]
the function $\tilde{h}(s):=\pi(G(s))$ satisfies the matrix-valued Bratu equation
\[
(\tilde{h}'\tilde{h}^{-1})'
=4(a\T a)\tilde{h}^{-1}. 
\]
Therefore, the function
\[
{h}(s):=\Delta(e^{sB}u_0, e^{sB}u_0)^{-1}=\frac12 \tilde{h}(s)
\]
satisfies the equation
\[
({h}'{h}^{-1})'=2(a\T a) {h}^{-1}. 
\]
\end{proof}

%

\newpage
\subsection{Bounded domain of type BDI and power series solution}

\paragraph{Bounded domain $D(L)$ of type BDI}
Let
\begin{align*}
L=\mat{I_n&0&0 \\ 0&-I_r&0 \\  0&0&-I_n}. 
\end{align*}
We define a subdomain $D(L)\subset D$ by
\begin{align*}
&M(L)=\ckakko{U\in M;\; \T ULU>0}, 
\\
&D(L):=\ckakko{u=[U]\in D;\; U\in M(L)}. 
\end{align*}

We notice the relation $L=AJ\T A$ with a matrix
\[
A=\frac{1}{\sqrt{2}}\mat{I_n&0&-I_n \\ 0&\sqrt{2}I_r&0 \\ I_n&0&I_n}. 
\]

\begin{lemma}
\begin{enumerate}
\item
It holds the following bijection: 
\begin{align*}
&D(J)\xrightarrow{\sim} D(L) : u\mapsto Au
\end{align*}
\item
(Realization as a bounded domain) 
$D(L)$ is described as
\[
\hspace{-0.6cm}
D(L)= \ckakko{v=\bmat{I_n \\ V_2 \\ V_3}\in D;\; 
I_n - \T V_2 V_2- \T V_3 V_3>0 }. 
\]
\end{enumerate}
\end{lemma}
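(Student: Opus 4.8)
The plan is to prove the two assertions in turn, using the orthogonality of $A$ as the common engine.

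\emph{First assertion.} I would begin by recording the single matrix computation $\T A A = I_{2n+r}$, i.e. $A \in \mathrm{O}(2n+r)$; combined with the stated relation $L = A J \T A$ this gives at once
\[
\T A\, L\, A = (\T A A)\, J\, (\T A A) = J .
\]
Next, left multiplication $U \mapsto AU$ sends $M$ into $M$ (an invertible $A$ preserves rank) and commutes with the right $\GL(n,\bbR)$-action, $A(Ug) = (AU)g$, so it descends to a well-defined map $D \to D$; since $A$ is invertible with $A^{-1} = \T A$, this descended map is a bijection of $D$. Finally, for $U \in M$,
\[
\T(AU)\, L\, (AU) = \T U\,(\T A\, L\, A)\, U = \T U J U ,
\]
so $U \in M(J) \Leftrightarrow AU \in M(L)$. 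Hence the bijection of $D$ restricts to the desired bijection $D(J) \xrightarrow{\sim} D(L)$, $u \mapsto Au$.

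\emph{Second assertion.} Take $v = [V] \in D(L)$ and write $V$ in blocks $V_1 \in \rM_{n,n}(\bbR)$, $V_2 \in \rM_{r,n}(\bbR)$, $V_3 \in \rM_{n,n}(\bbR)$, so that
\[
\T V\, L\, V = \T V_1 V_1 - \T V_2 V_2 - \T V_3 V_3 > 0 .
\]
Since the left-hand side is positive definite while $\T V_2 V_2$ and $\T V_3 V_3$ are positive semidefinite, we get $\T V_1 V_1 \ge \T V L V > 0$, so $V_1$ is invertible. Replacing $V$ by $V V_1^{-1}$, which represents the same class $[V] \in D$, normalizes the first block to $I_n$, and the positivity condition then reads $I_n - \T V_2 V_2 - \T V_3 V_3 > 0$; moreover this representative with first block $I_n$ is the unique one in its class. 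Conversely, any $v$ of the displayed form has $\rank V = n$ and $\T V L V = I_n - \T V_2 V_2 - \T V_3 V_3 > 0$, hence lies in $D(L)$. This yields the claimed description.

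\emph{Main obstacle.} The argument is essentially bookkeeping, so there is no deep difficulty; the only step requiring genuine care is the invertibility of $V_1$ that licenses the normalization $V_1 = I_n$, which rests on reading off that the positive-definite form $\T V L V$ is dominated by $\T V_1 V_1$. Everything else hinges on the one explicit identity $\T A A = I$, from which $\T A L A = J$ and the whole transport between $D(J)$ and $D(L)$ follow immediately.
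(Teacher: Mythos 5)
Your proof is correct. The paper actually states this lemma without any proof, so there is nothing to compare against; your argument --- verifying $\T A A = I_{2n+r}$ so that $L = AJ\T A$ gives $\T A L A = J$ and the form is transported under $u \mapsto Au$, then normalizing the first block to $I_n$ via the invertibility of $V_1$ (which follows from $\T V_1 V_1 \ge \T V L V > 0$) --- is precisely the standard bookkeeping the paper leaves implicit, and it fills the omission cleanly.
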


Next, we take two points $v_0, z$ as 
\begin{align*}
&v_0=[V_0]\in D(L), \quad z=[Z]\in A\Sigma, 
\\
&V_0=-\sqrt{2}\mat{I_n \\ 0 \\ 0}=AU_0, \quad 
Z=\frac{1}{\sqrt{2}}\mat{-I_n \\ 0 \\ I_n}=AW. 
\end{align*}

For two points $v_1, v_2\in D(L)$ expressed as
\[
v_j=[V_j], \quad V_j=
\mat{V_{j,1} \\ V_{j,2} \\ V_{j,3}}
\quad (j=1,2), 
\]
we define $\psi(v_1,v_2)\in \rM_n(\bbR)$ by
\begin{align*}
\psi(v_1,v_2)&=\T (V_1 \Gamma_1^{-1})L(V_2\Gamma_2^{-1}) , 
\end{align*}
\[
\Gamma_j=\T Z V_j=(V_{j,3}-V_{j,1})/{\sqrt{2}} \quad (j=1,2). 
\]
\begin{lemma}
For any $u_1, u_2\in D(J)$, it holds that
\[
\psi(Au_1, Au_2)=\Delta(u_1,u_2). 
\]
\end{lemma}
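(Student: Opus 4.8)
The plan is to reduce both sides to the same expression by exploiting the fact that the transition matrix $A$ is orthogonal. The first step I would take is to record that $\T A A = I_{2n+r}$; this is a short block computation from the explicit form of $A$, and it shows $\T A = A^{-1}$. Combined with the stated relation $L = AJ\T A$, it yields the two identities I will actually use: the conjugation identity $J = \T A L A$, and, since $Z = AW$, the pairing identity $\T Z(AU) = \T W\,(\T A A)\,U = \T W U$ for every $U\in M$.

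Next I would fix representatives for the classes. Both $\Delta$ and $\psi$ descend to the quotient $D$: under a right substitution $U_j \mapsto U_j g$ with $g\in \GL(n,\bbR)$ one has $\Gamma_j \mapsto \Gamma_j g$, so the normalized matrix $U_j\Gamma_j^{-1} \mapsto U_j g(\Gamma_j g)^{-1} = U_j\Gamma_j^{-1}$ is unchanged, and likewise for $\psi$. Hence I may represent the point $Au_j$ by the matrix $V_j := A U_j$. With this choice the pairing identity from the first step gives $\T Z V_j = \T W U_j$, so the two matrices $\Gamma_j$ occurring in the definitions of $\psi(Au_1,Au_2)$ and $\Delta(u_1,u_2)$ literally coincide; I denote this common matrix by $\Gamma_j$.

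Finally I would substitute into the definition of $\psi$ and perform the cancellation:
\begin{align*}
\psi(Au_1,Au_2) &= \T(A U_1 \Gamma_1^{-1})\,L\,(A U_2 \Gamma_2^{-1}) \\
&= \T\Gamma_1^{-1}\,\T U_1\,(\T A L A)\,U_2\,\Gamma_2^{-1} \\
&= \T(U_1\Gamma_1^{-1})\,J\,(U_2\Gamma_2^{-1}),
\end{align*}
which is exactly $\Delta(u_1,u_2)$ by definition. There is no genuine obstacle in this argument: the whole computation is the conjugation identity $\T A L A = J$ together with the matching of the $\Gamma_j$'s. The only two points demanding any care are the verification that $A$ is orthogonal (which is what makes $\T A = A^{-1}$ and forces the $\Gamma_j$'s to agree) and the remark that both pairings are well defined on $D$, so that computing with the specific representatives $V_j = A U_j$ is legitimate.
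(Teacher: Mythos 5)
Your proposal is correct and follows essentially the same route as the paper's proof: both match the normalizing matrices via $\T Z(AU_j)=\T W U_j$ and then cancel $A$ against $L$ using $\T A L A = J$. The only difference is that you make explicit two points the paper leaves implicit — the orthogonality $\T A A = I_{2n+r}$ and the well-definedness of the pairings on the quotient $D$ — which is a reasonable amount of added care but not a different argument.
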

By above lemma, we have the following commutative diagram: 
\[
\begin{xy}
\xymatrix{
D(J)\times D(J) \ar[r]^-{A\times A} \ar[dr]^{\circlearrowright}_-{\Delta(\cdot,\cdot)}& D(L)\times D(L) \ar[d]^-{\psi(\cdot,\cdot)}
\\
&\Sym^+_n(\bbR)
}
\end{xy}
\]

\begin{proof}
Take arbitrary $u_1, u_2\in D(J)$ and express them as
\[
u_j=[U_j]
\quad (j=1,2). 
\]
With the relation
\[
\Gamma_j=\T Z (AU_j)=\T WU_j, 
\]
we calculate as
\begin{align*}
\psi(Au_1,Au_2)&=\T (AU_1 \Gamma_1^{-1})L(AU_2 \Gamma_2^{-1})
\\
&=\T (U_1 \Gamma_1^{-1})J(U_2 \Gamma_2^{-1})
\\
&= \Delta(u,u). 
\end{align*}
\end{proof}

Put
$\fkp(L)=\ckakko{X\in \fkg(L);\; X=\T X}$. 

\begin{proposition}
For $B\in \fkp(L)$, assume the following Taylor expansion 
\[
\psi(z, e^{sB}z)=(a\T a) \frac{s^2}{2} +O(s^3), \quad a\in \rM_{n,r}(\bbR)
\]
at $s=0$. 
Define a $\Sym^+_n(\bbR)$-valued function $h(s)$ by
\[
h(s)=\psi(e^{sB}v_0, e^{sB}v_0)^{-1}, 
\]
Then, $h(s)$ satisfies the matrix-valued Bratu equation; 
\[
(h'h^{-1})'=2(a\T a) h^{-1}. 
\]
\end{proposition}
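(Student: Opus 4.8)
The plan is to transport the statement to the Siegel domain $D(J)$ through the orthogonal change of variables $A$, and then to invoke the preceding proposition for $D(J)$ essentially verbatim. First I would record that $A$ is orthogonal: a direct computation gives $A\T A=I_{2n+r}$, so $A^{-1}=\T A$ and the stated relation $L=AJ\T A$ becomes $L=AJA^{-1}$. Consequently, conjugation $X\mapsto AXA^{-1}=AX\T A$ carries $\fkg(J)$ onto $\fkg(L)$: for $X\in\fkg(J)$ one has $(AX\T A)L+L\T(AX\T A)=A(XJ+J\T X)\T A=0$, and the same map visibly preserves symmetry. Hence, given $B\in\fkp(L)$, the element $B':=\T A B A=A^{-1}BA$ lies in $\fkp(J)$ and satisfies $e^{sB}=Ae^{sB'}\T A$.

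Next I would rewrite the two arguments appearing in the proposition as $A$-images of the corresponding objects for $D(J)$. Using $z=Aw$, $v_0=Au_0$, and the group action $g\cdot[U]=[gU]$, the relation $e^{sB}=Ae^{sB'}A^{-1}$ yields
\[
e^{sB}z=A(e^{sB'}w),\qquad e^{sB}v_0=A(e^{sB'}u_0)
\]
as points of $D$. Applying the lemma $\psi(Au_1,Au_2)=\Delta(u_1,u_2)$ then gives
\[
\psi(z,e^{sB}z)=\Delta(w,e^{sB'}w),\qquad
\psi(e^{sB}v_0,e^{sB}v_0)=\Delta(e^{sB'}u_0,e^{sB'}u_0).
\]

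With these identifications the hypothesis $\psi(z,e^{sB}z)=(a\T a)\frac{s^2}{2}+O(s^3)$ becomes precisely the Taylor-expansion hypothesis of the preceding Siegel-domain proposition, now for $B'\in\fkp(J)$; and the function $h(s)=\psi(e^{sB}v_0,e^{sB}v_0)^{-1}$ coincides with $\Delta(e^{sB'}u_0,e^{sB'}u_0)^{-1}$, which is exactly the $h(s)$ of that proposition. Invoking it directly then gives $(h'h^{-1})'=2(a\T a)h^{-1}$, as claimed. Note that I do not need to separately derive the block form $B'=B(b,a,0)$: this is inferred inside the Siegel-domain proposition from its Taylor hypothesis, which I have just verified for $B'$.

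The only genuine verification is that conjugation by $A$ identifies $\fkp(J)$ with $\fkp(L)$, i.e. that $B'\in\fkp(J)$; this is where the orthogonality of $A$ and the identity $L=AJ\T A$ enter. Beyond this, the argument is a mechanical transport of structure through the isomorphism $A$, so I anticipate no real obstacle: the content of the proposition is inherited from the Siegel-domain case via the equivariance lemma relating $\psi$ and $\Delta$.
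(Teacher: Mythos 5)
Your proof is correct and is precisely the argument the paper intends: the paper states this proposition without a written proof, immediately after establishing $L=AJ\T A$, the identifications $V_0=AU_0$, $Z=AW$, and the lemma $\psi(Au_1,Au_2)=\Delta(u_1,u_2)$, all of which exist exactly so that the statement transports back to the Siegel-domain proposition via conjugation by $A$, as you describe. The only cosmetic point is that you invoke that lemma with $w\in\Sigma$ on the Shilov boundary rather than in $D(J)$, but its purely algebraic proof (needing only invertibility of the matrices $\Gamma_j$) covers this case, which the paper itself implicitly assumes.
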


\paragraph{Derivation of the power series solution}
We take an arbitrary $B\in \fkp(L)$ and express it
\[
B=\mat{0&\T C \\ C&0}, \quad C\in \rM_{n+r,n}(\bbR). 
\]
We consider the singular value decomposition of the submatrix $C$ in the following form: 
\[
C=\mat{p_1&p_3 \\ p_2&p_{}}\mat{0 \\ \sigma}\T q, 
\]
\[
\mat{p_1&p_3 \\ p_2&p_{}}\in \mathrm{O}(n+r), \quad 
q\in \mathrm{O}(n), 
\]
\[
\sigma=\diag(\sigma_i)_{i=1}^n \in \rM_n(\bbR). 
\]
We define the following functions of $s\in \bbR$: 
\[
\mathrm{ch}(s\sigma):=\frac12(e^{s\sigma}+e^{-s\sigma}), \quad \mathrm{sh}(s\sigma):=\frac12(e^{s\sigma}-e^{-s\sigma}). 
\]
\begin{proposition}
For $B\in \fkp(L)$, define a function
\[
h(s):=\psi(e^{sB}v_0, e^{sB}v_0)^{-1}. 
\]
Then, $h(s)$ has the following expression: 
\begin{align*}
h(s)&=\frac12\bme(s)\T\bme(s), \quad \bme(s)=p_{} \mathrm{sh}(s\sigma)  - q\mathrm{ch}(s\sigma). 
\end{align*}
\end{proposition}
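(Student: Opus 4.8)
The plan is to compute the representative $V(s):=e^{sB}V_0$ in closed form by diagonalizing $B$ through the singular value decomposition of $C$, then to evaluate the two ingredients of $\psi$ separately, namely $\Gamma(s)=\T Z V(s)$ and $\T V(s)LV(s)$, and finally to assemble and invert. To diagonalize, set $R=\mat{q&0\\0&P}\in\rO(2n+r)$ with $P=\mat{p_1&p_3\\p_2&p}$. Since $\T PCq=\mat{0\\\sigma}$, a direct block computation gives, in the $n+r+n$ partition,
\[
\T RBR=\mat{0&0&\sigma\\0&0&0\\\sigma&0&0},
\]
so that only the two outer $n$-blocks are coupled. Its exponential is then immediate,
\[
e^{s\,\T RBR}=\mat{\mathrm{ch}(s\sigma)&0&\mathrm{sh}(s\sigma)\\0&I_r&0\\\mathrm{sh}(s\sigma)&0&\mathrm{ch}(s\sigma)},
\]
and since $e^{sB}=R\,e^{s\,\T RBR}\,\T R$, applying this to $V_0=-\sqrt2\mat{I_n\\0\\0}$ yields
\[
V(s)=-\sqrt2\mat{q\,\mathrm{ch}(s\sigma)\,\T q\\p_3\,\mathrm{sh}(s\sigma)\,\T q\\p\,\mathrm{sh}(s\sigma)\,\T q}.
\]

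From the first and third blocks I read off $\Gamma(s)=\T ZV(s)=\tfrac{1}{\sqrt2}\bigl(V_3(s)-V_1(s)\bigr)=-\bigl(p\,\mathrm{sh}(s\sigma)-q\,\mathrm{ch}(s\sigma)\bigr)\T q=-\bme(s)\,\T q$, so that $\Gamma\,\T\Gamma=\bme\,\T q\,q\,\T\bme=\bme(s)\,\T\bme(s)$ after cancelling $\T q\,q=I_n$. For the remaining factor $\T VLV$ the cleanest route is invariance: because $B\in\fkp(L)$ is symmetric and lies in $\fkg(L)$ it satisfies $BL+LB=0$, hence $e^{sB}L=Le^{-sB}$ and therefore $\T(e^{sB})\,L\,e^{sB}=L\,e^{-sB}e^{sB}=L$; consequently $\T VLV=\T V_0LV_0=2I_n$ is constant. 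Alternatively one verifies this directly from the explicit blocks, using $\mathrm{ch}(s\sigma)^2-\mathrm{sh}(s\sigma)^2=I_n$ together with the column-orthonormality relation $\T p_3p_3+\T p\,p=I_n$ coming from $\T PP=I_{n+r}$.

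Assembling the two computations gives
\[
\psi\bigl(e^{sB}v_0,e^{sB}v_0\bigr)=(\T\Gamma)^{-1}\bigl(\T VLV\bigr)\Gamma^{-1}=2(\T\Gamma)^{-1}\Gamma^{-1}=2\bigl(\Gamma\,\T\Gamma\bigr)^{-1},
\]
whence $h(s)=\psi^{-1}=\tfrac12\,\Gamma\,\T\Gamma=\tfrac12\,\bme(s)\,\T\bme(s)$, as claimed. I expect the only real difficulty to be organizational rather than conceptual: one must keep the two nested block partitions consistent — the $n+(n+r)$ splitting natural for $B$ and $C$ against the $n+r+n$ splitting natural for $L$ and for points of $D$ — while substituting the singular value decomposition, and then recognize that the invariance $\T(e^{sB})\,L\,e^{sB}=L$ (equivalently the hyperbolic identity and the orthonormality of the last $n$ columns of $P$) collapses $\T VLV$ to the constant $2I_n$ that makes the final inversion trivial.
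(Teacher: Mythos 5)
Your proof is correct and follows essentially the same route as the paper: conjugate $B$ into the $n{+}r{+}n$ block form $\bigl(\begin{smallmatrix}0&0&\sigma\\0&0&0\\\sigma&0&0\end{smallmatrix}\bigr)$ via the SVD, exponentiate explicitly, apply to $V_0$, and read off $\Gamma$ so that $\psi=2\T\Gamma^{-1}\Gamma^{-1}$. The only addition is that you justify $\T V L V=2I_n$ explicitly via the invariance $\T(e^{sB})Le^{sB}=L$ (or the hyperbolic identity plus column orthonormality), a step the paper leaves implicit in passing from $\T(e^{sB}V_0\Gamma^{-1})L(e^{sB}V_0\Gamma^{-1})$ to $2\T\Gamma^{-1}\Gamma^{-1}$.
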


\begin{proof}
From the singular value decomposition, we get the equation
\[
B=P\mat{0&0&\sigma \\ 0&0&0 \\ \sigma&0&0}\T P, \quad
P:= \mat{q&0&0 \\ 0&p_1 & p_3 \\ 0&p_2&p_{}}. 
\]
Then, we can calculate its exponential as
\begin{align*}
e^{sB}V_0
&=P\mat{\mathrm{ch}(s\sigma)&0&\mathrm{sh}(s\sigma) \\ 0&I_r&0 \\ \mathrm{sh}(s\sigma)&0&\mathrm{ch}(s\sigma)}\T P V_0
\\
&=\sqrt{2}\mat{q\mathrm{ch}(s\sigma) \T q \\ p_3 \mathrm{sh}(s\sigma) \T q \\ p_{} \mathrm{sh}(s\sigma) \T q}. 
\end{align*}
Putting
\[
\Gamma:=p_{} \mathrm{sh}(s\sigma) \T q - q\mathrm{ch}(s\sigma)\T q, \quad
\bme(s)=\Gamma q, 
\]
we obtain that
\begin{align*}
\psi(e^{sB}v_0, e^{sB}v_0)
&=\T (e^{sB}V_0 \Gamma^{-1})L(e^{sB}V_0 \Gamma^{-1}) 
\\
&=2\T \Gamma^{-1}\Gamma^{-1}
=2\T\bme(s)^{-1}\bme(s)^{-1}. 
\end{align*}
\end{proof}

\begin{cor}
$h(s)$ has the following power series expansion: 
\begin{align*}
h(s)&=\frac12 I_{n}
+\frac12\sum_{n = 1}^{\infty} \kakko{q \sigma^{2 n}\; \T q+p_{} \sigma^{2 n}\; \T p_{}} \frac{s^{2 n}}{(2 n) !}
\\
&-\frac12\sum_{n = 1}^{\infty} \kakko{p_{} \sigma^{2 n-1}\;\T q+q\sigma^{2 n-1}\; \T p_{}} \frac{s^{2 n-1}}{(2 n-1) !}. 
\end{align*}
\end{cor}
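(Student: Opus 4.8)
The plan is to read the series off directly from the closed form obtained in the preceding proposition, $h(s)=\frac12\bme(s)\T\bme(s)$ with $\bme(s)=p\,\mathrm{sh}(s\sigma)-q\,\mathrm{ch}(s\sigma)$. Because $\sigma$ is diagonal, both $\mathrm{sh}(s\sigma)$ and $\mathrm{ch}(s\sigma)$ are diagonal, hence symmetric and mutually commuting, so I may transpose and reorder them at will. First I would expand the quadratic form
\[
2h(s)=\bme(s)\T\bme(s)=p\,\mathrm{sh}(s\sigma)^2\,\T p+q\,\mathrm{ch}(s\sigma)^2\,\T q-p\,\mathrm{sh}(s\sigma)\mathrm{ch}(s\sigma)\,\T q-q\,\mathrm{ch}(s\sigma)\mathrm{sh}(s\sigma)\,\T p,
\]
and then substitute the entire-function expansions $\mathrm{ch}(s\sigma)=\sum_{k\ge0}\sigma^{2k}\frac{s^{2k}}{(2k)!}$ and $\mathrm{sh}(s\sigma)=\sum_{k\ge0}\sigma^{2k+1}\frac{s^{2k+1}}{(2k+1)!}$.

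Next I would separate by parity. The two ``diagonal'' terms $p\,\mathrm{sh}^2\,\T p$ and $q\,\mathrm{ch}^2\,\T q$ contribute only even powers of $s$, while the two mixed terms $p\,\mathrm{sh}\,\mathrm{ch}\,\T q$ and $q\,\mathrm{ch}\,\mathrm{sh}\,\T p$ contribute only odd powers; this matches the split of the claimed answer into an even sum (in $s^{2n}$) and an odd sum (in $s^{2n-1}$). The orthogonality $\T q\,q=q\,\T q=I_n$ coming from $q\in\mathrm{O}(n)$ is what fixes the constant term as $h(0)=\frac12 I_n$, and I would keep the block relations forced by orthogonality of the singular-vector matrix (such as $p\,\T p=I_n-p_2\,\T p_2$) on hand in case they are needed to consolidate the even part.

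The step I expect to be the crux is the reduction of the quadratic cross terms, and it is where I would be most careful. Since $h$ is quadratic in $\bme$, the products $\mathrm{sh}^2$, $\mathrm{ch}^2$, $\mathrm{sh}\,\mathrm{ch}$ collapse through $\mathrm{ch}^2-\mathrm{sh}^2=I$ together with $\mathrm{ch}(s\sigma)^2=\frac12\bigl(I+\mathrm{ch}(2s\sigma)\bigr)$, $\mathrm{sh}(s\sigma)^2=\frac12\bigl(\mathrm{ch}(2s\sigma)-I\bigr)$ and $2\,\mathrm{sh}(s\sigma)\mathrm{ch}(s\sigma)=\mathrm{sh}(2s\sigma)$, all of which introduce the doubled argument $2s\sigma$. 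I would therefore track the powers of two produced by this double-angle reduction when matching against the stated coefficients $\frac12\bigl(q\sigma^{2n}\,\T q+p\sigma^{2n}\,\T p\bigr)$ and $\frac12\bigl(p\sigma^{2n-1}\,\T q+q\sigma^{2n-1}\,\T p\bigr)$, and I would pin every coefficient down on the scalar case $n=1$, $r=0$, where $h(s)=\frac12 e^{-2s\sigma}$ makes the exponent, and hence each factor of two, completely explicit.
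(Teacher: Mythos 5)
Your plan is essentially the paper's own proof: expand $2h(s)=\bme(s)\T\bme(s)$ into the four products, collapse $\mathrm{ch}(s\sigma)^2$, $\mathrm{sh}(s\sigma)^2$, $\mathrm{sh}(s\sigma)\mathrm{ch}(s\sigma)$ by the double-angle identities, and use $q\T q=I_n$ to split off the constant block. The paper's computation stops at
\[
h(s)=\frac14\ckakko{2I_n+q\kakko{\mathrm{ch}(2s\sigma)-I_n}\T q+p\kakko{\mathrm{ch}(2s\sigma)-I_n}\T p-q\,\mathrm{sh}(2s\sigma)\T p-p\,\mathrm{sh}(2s\sigma)\T q},
\]
which is exactly what your first steps produce. (The reserve identity $p\T p=I_n-p_2\T p_2$ is never needed; the $p(\cdot)\T p$ terms are left as they stand.)

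However, the step you isolated as the crux --- tracking the powers of two created by the doubled argument --- is not merely a point of care: carried out, it shows the printed series is \emph{not} the expansion of the displayed closed form. Since $\mathrm{ch}(2s\sigma)-I_n=\sum_{m\geq1}2^{2m}\sigma^{2m}\frac{s^{2m}}{(2m)!}$ and $\mathrm{sh}(2s\sigma)=\sum_{m\geq1}2^{2m-1}\sigma^{2m-1}\frac{s^{2m-1}}{(2m-1)!}$, after the overall factor $\frac14$ the true coefficient of $\frac{s^k}{k!}$ is $2^{k-2}\kakko{\cdots}$, whereas the corollary asserts the constant $\frac12\kakko{\cdots}$; these agree only at $k=1$. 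Your scalar test detects this immediately: for $n=1$, $r=0$, $p=q=1$ the closed form gives $h(s)=\frac12e^{-2s\sigma}$, while the printed series sums to $\mathrm{ch}(s\sigma)-\mathrm{sh}(s\sigma)-\frac12=e^{-s\sigma}-\frac12$, and the two differ from the $s^2$ term on (indeed $h''(0)=q\sigma^2\T q+p\sigma^2\T p$, twice the printed coefficient). So your method is sound and coincides with the paper's, but what it proves is the corrected expansion
\begin{align*}
h(s)={}&\frac12 I_{n}
+\frac14\sum_{m=1}^{\infty} \kakko{q \sigma^{2m}\,\T q+p\, \sigma^{2m}\,\T p} \frac{(2s)^{2m}}{(2m)!}
\\
&-\frac14\sum_{m=1}^{\infty} \kakko{p\, \sigma^{2m-1}\,\T q+q\,\sigma^{2m-1}\,\T p} \frac{(2s)^{2m-1}}{(2m-1)!},
\end{align*}
i.e.\ the printed statement with each $\sigma^k$ replaced by $2^{k-1}\sigma^k$; the corollary as stated drops these factors, an error the paper's own proof does not catch because it stops at the double-angle expression above and never expands it into the claimed series.
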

\begin{proof}
It is shown by the direct calculation: 
\begin{align*}
&\frac12 \bme(s)\T \bme(s)
\\
&=\frac12 \mat{-q & p}\mat{\mathrm{ch}(s\sigma)^2&\mathrm{ch}(s\sigma) \mathrm{sh}(s\sigma) \\ \mathrm{sh}(s\sigma) \mathrm{ch}(s\sigma)&\mathrm{sh}(s\sigma)^2}\mat{-\T q \\ \T p}
\\
&=\frac14 \mat{-q & p}\mat{\mathrm{ch}(2s\sigma)+I_n&\mathrm{sh}(2s\sigma) \\ \mathrm{sh}(2s\sigma) &\mathrm{ch}(2s\sigma)-I_n}\mat{-\T q \\ \T p}
\\
&=\frac14\{2I_n+q(\mathrm{ch}(2s\sigma)-I_n)\T q +p(\mathrm{ch}(2s\sigma)-I_n)\T p 
\\
&\quad -q\mathrm{sh}(2s\sigma)\T p-p \mathrm{sh}(2s\sigma)\T q\}. 
\end{align*}
\end{proof}

\clearpage

\section{Analog to symmetric domain of type CI}

\paragraph{Manifold $\Omega(J)$}
We define a submanifold $\Omega(J)$ in the cone $\Sym^+_{2n}(\bbR)$ by
\[
\Omega(J)=\ckakko{
G\in \Sym^+_{2n}(\bbR);\; JGJ=G^{-1}
}, 
\]
\[
J=\mat{0&iI_n \\  -iI_n&0}. 
\]
Setting the set $V(J)$ by
\[
V(J)=\ckakko{B\in \Sym_{2n}(\bbR);\; JBJ=-B}, 
\]
we can consider the following map: 
\[
V(J)\ni B \mapsto \exp B \in \Omega(J). 
\]

We see that any element $B\in V(J)$ has the following form: 
\[
B=B(b,c):=\mat{b&c  \\ c&-b}, \quad
\begin{array}{l}
b, c\in \Sym_n(\bbR). 
\end{array}
\]
For an element $B\in V(J)$, we define an $\Omega(J)$-valued function $G(s;B)$ by
\[
G(s; B) :=\exp(sB(b,c)) \quad (s\in \bbR). 
\]

We define the following sets of matrices: 
\begin{align*}
&\cG(J)=\ckakko{G\in \GL_{2n}(\bbR);\; JGJ=\T G^{-1}}\cong \mathrm{Sp}(n, \bbR), 
\\
&\cN(J)
=\ckakko{N\in \cG(J);\; N=\mat{I_n&0 \\ *&I_n}}, 
\\
&\qquad=\ckakko{N=\mat{I_n &0  \\ n_1&I_n}, \; n_1= \T n_1}, 
\\
&\Omega_0(J)=\ckakko{A\in \Omega(J);\; A=\mat{*&0 \\ 0&* }}, 
\\
&\qquad=\ckakko{A=\mat{h&0 \\ 0&h^{-1}}, \T h=h}. 
\end{align*}

\begin{proposition}[Block-Gauss decomposition for $\Omega(J)$]
The following map is bijective: 
\[
\begin{array}{ccc}
\cN(J)\times \Omega_0(J) &\rightarrow& \Omega(J)\\
 (N, A)&\mapsto& NA\T N. 
 \end{array}
\]
\end{proposition}

We denote the variable change defined through this decomposition by
\[
G \rightarrow N, A \rightarrow h, n_1, 
\]
or simply by $G  \rightarrow h$.

\begin{lemma}
Under the variable change
\[
G \rightarrow N, A \rightarrow h, n_1, 
\]
it holds that
\begin{align*}
&\frac12\tr((G' G^{-1})^2)
\\
&=\frac12\tr(( A' A^{-1})^2)
+\tr(A\T N' \T N^{-1} A^{-1}N^{-1} N')
\\
&=\tr(( h' h^{-1})^2)
+\tr(h\T n_1' h n_1'). 
\end{align*}
\end{lemma}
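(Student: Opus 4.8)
The plan is to follow verbatim the strategy of Lemma \ref{lem_1}, since in the present $2n\times 2n$ two-block setting the structural ingredients are identical. First I would record the analog of the orthogonality relation of Lemma \ref{lem_2}: writing $M=\rM_{2n}(\bbR)$ and setting
\[
M_{\leq 0}=\ckakko{X\in M;\; X=\mat{*&0 \\ *&*}}, \quad
M_{<0}=\ckakko{X\in M;\; X=\mat{0&0 \\ *&0}},
\]
one has $\tr(X_1X_2)=0$ whenever $X_1\in M_{\leq 0}$ and $X_2\in M_{<0}$, because the product is strictly block-lower-triangular and therefore trace-free.

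Next I would insert $G=NA\T N$ into $G'G^{-1}$ and split it as
\[
G'G^{-1}=X_1+X_2+X_3,\quad
X_1=NA\T N'\T N^{-1}A^{-1}N^{-1},\ X_2=\Ad(N)(A'A^{-1}),\ X_3=N'N^{-1}.
\]
The decisive point is the triangularity type of each summand: since $N,N^{-1}$ are block-lower-unitriangular and $A,A^{-1},A'A^{-1}$ are block-diagonal, we have $X_2\in M_{\leq 0}$ and $X_3\in M_{<0}$, while $X_1\in\Ad(N)\T M_{<0}$ because $A\T N'\T N^{-1}A^{-1}$ is strictly block-upper-triangular. Conjugation invariance of the trace then forces $\tr(X_1^2)=\tr(X_3^2)=\tr(X_1X_2)=0$, and the orthogonality relation gives $\tr(X_2X_3)=0$, leaving
\[
\tfrac12\tr((G'G^{-1})^2)=\tr(X_1X_3)+\tfrac12\tr(X_2^2).
\]
Conjugating by $N^{-1}$ inside the first trace rewrites $\tr(X_1X_3)=\tr(A\T N'\T N^{-1}A^{-1}N^{-1}N')$, which is exactly the middle expression in the statement.

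It then remains to evaluate the two surviving traces from the explicit block forms. With $A'A^{-1}=\diag(h'h^{-1},-h^{-1}h')$ one gets $\tfrac12\tr((A'A^{-1})^2)=\tr((h'h^{-1})^2)$. For the cross term, using $n_1=\T n_1$ one has $\T N'\T N^{-1}=\mat{0&n_1' \\ 0&0}$ and $N^{-1}N'=\mat{0&0 \\ n_1'&0}$, so that $A\T N'\T N^{-1}A^{-1}N^{-1}N'=\mat{hn_1'hn_1'&0 \\ 0&0}$, whose trace is $\tr(hn_1'hn_1')=\tr(h\T n_1'hn_1')$. Summing the two contributions yields the asserted identity. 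The computation is routine throughout; the only steps demanding attention are the bookkeeping of triangularity types that justifies discarding the unwanted trace terms, and the repeated use of the symmetry $n_1=\T n_1$ in the final block multiplication.
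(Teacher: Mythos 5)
Your proposal is correct and is exactly the intended argument: the paper leaves this CI lemma unproved precisely because it follows the proof of Lemma \ref{lem_1} verbatim, and your adaptation (the two-block orthogonality relation, the decomposition $G'G^{-1}=X_1+X_2+X_3$ with the same triangularity bookkeeping, and the final block computation using $n_1=\T n_1$) reproduces that argument faithfully, with all computations checking out.
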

%

By the direct derivation of the Euler-Lagrange equation, we obtain the following analog of the matrix-valued Bratu equation.

\begin{theorem}
The equation
\[
G'(s)G(s)^{-1}=B(b,c)
\]
for an $\Omega(J)$-valued function $G(s)$ is expressed as
\[
({h}^{\prime}(s) {h}(s)^{-1})^{\prime}=(c{h}(s)^{-1})^2
\]
by the variable change $G(s)\rightarrow h(s), n_1(s)$. 
\end{theorem}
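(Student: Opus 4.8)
The plan is to follow the route already used for type BDI. The hypothesis $G'(s)G(s)^{-1}=B(b,c)$ with $B$ constant is exactly the geodesic condition $(G'G^{-1})'=0$, i.e.\ the Euler--Lagrange equation of the Lagrangian $\frac12\tr((G'G^{-1})^2)$ on $\Omega(J)$ subject to its defining constraints. I would push this Lagrangian through the block-Gauss change $G\to h,n_1$ and use the preceding Lemma to rewrite it as
\[
\cL=\tr((h'h^{-1})^2)+\tr(h\,\T n_1'\,h\,n_1'),
\]
and then read the theorem off from the Euler--Lagrange system of $\cL$, keeping in mind that $n_1=\T n_1$, so $\T n_1'=n_1'$.

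First I would exploit that $n_1$ is cyclic: $\cL$ involves $n_1$ only through $n_1'$, so the conjugate momentum $\partial\cL/\partial n_1'=2hn_1'h$ (already symmetric) is conserved, giving $(hn_1'h)'=0$, i.e.\ $hn_1'h$ is a constant symmetric matrix. Next I would carry out the variation in $h$: the term $\tr((h'h^{-1})^2)$ produces $\partial\cL/\partial h'=2h^{-1}h'h^{-1}$ together with a contribution $-2h^{-1}h'h^{-1}h'h^{-1}$ to $\partial\cL/\partial h$ coming from the $h^{-1}$ factors, while $\tr(hn_1'hn_1')$ contributes $2n_1'hn_1'$. Combining these and conjugating the resulting identity by $h$ collapses the second-order part into $(h'h^{-1})'$, yielding the clean $h$-equation $(h'h^{-1})'=hn_1'hn_1'$.

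It then remains to identify the conserved matrix $hn_1'h$ with the parameter $c$, just as the constants were pinned down in the BDI argument. For this I would compute $G'G^{-1}$ directly from $G=NA\,\T N$ in block form; its $(1,2)$ block is precisely $hn_1'h$, so the hypothesis $G'G^{-1}=B(b,c)$ forces $hn_1'h=c$, hence $hn_1'=ch^{-1}$. Substituting into the $h$-equation gives $(h'h^{-1})'=(hn_1')(hn_1')=(ch^{-1})^2$, as claimed.

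The step I expect to require the most care is the $h$-variation: the matrix variational calculus with the $h^{-1}$ factors must be handled so that the raw Euler--Lagrange expression reorganizes, via the identity $(h'h^{-1})'h=h''-h'h^{-1}h'$, into the compact form above. A convenient shortcut avoiding this is to read the $(1,1)$ block of $G'G^{-1}$, which equals $h'h^{-1}-hn_1'hn_1=h'h^{-1}-cn_1$; setting it equal to $b$ and differentiating gives $(h'h^{-1})'=cn_1'=(ch^{-1})^2$ immediately. Finally, the full equivalence (consistency of the remaining blocks and the converse passage from $h$ back to $n_1$) is automatic, since $B\in V(J)$ is symmetric and forces its $(1,2)$ and $(2,1)$ blocks to coincide, while $G=NA\,\T N\in\Omega(J)$ makes the decomposition well defined.
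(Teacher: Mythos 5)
Your proposal is correct and its main line is exactly the paper's own route: the paper proves this theorem "by the direct derivation of the Euler--Lagrange equation" of the reduced Lagrangian $\tr((h'h^{-1})^2)+\tr(h n_1' h n_1')$ from the preceding lemma, i.e.\ the conserved momentum $(hn_1'h)'=0$ from the cyclic variable $n_1$, the $h$-equation $(h'h^{-1})'=hn_1'hn_1'$, and the identification of the constant with $c$ via the blocks of $G'G^{-1}$, just as you do (this mirrors the BDI derivation carried out in detail in Section 2). The shortcut you append is worth noting as a genuinely more elementary alternative the paper does not use: since the hypothesis already says $G'G^{-1}=B(b,c)$ with constant blocks, reading off $hn_1'h=c$ from the $(1,2)$ block and $h'h^{-1}-cn_1=b$ from the $(1,1)$ block and differentiating gives $(h'h^{-1})'=cn_1'=(ch^{-1})^2$ with no variational calculus at all; both block computations check out against $G=NA\,\T N$.
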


%

\subsection{Siegel domain $D(J;K)$ of type CI}
Similarly as the type of BDI, 
we put
\begin{align*}
&M=\ckakko{U=\mat{U_1 \\ U_2 }\in \rM_{2n,n}(\bbC);\; \rank U=n}
\end{align*}
and consider the actions
\begin{align*}
&\GL(2n,\bbC) \curvearrowright M \curvearrowleft \GL(n,\bbC)
\end{align*}
by the matrix multiplication. Put the quotient space
\[
D:=M/\GL(n,\bbC)=\ckakko{u=[U]=\bmat{U_1 \\ U_2 };\; U\in M}
\]
and regard it as a domain. We put
\[
J=\mat{0&iI_n \\  -iI_n&0}, \quad K=\mat{0&I_n \\  -I_n&0}
\]
and define a subdomain $D(J;K)$ by
\[
M(J;K)=\ckakko{U\in M;\; U^*JU>0, \T UKU=0}, 
\]
\[
D(J;K):=\ckakko{u=[U]\in D;\; U\in M(J;K)}. 
\]

\begin{lemma}[Realization as a Siegel upper half space]
$D(J;K)$ is the following set: 
\begin{align*}
D(J;K)&= \ckakko{u=\bmat{U_1  \\ I_n};\; \frac{1}{i}(U_1- U_1^*)>0, \T U_1=U_1 }
\\
&\cong \Sp(n,\bbR)/\rU(n). 
\end{align*}
\end{lemma}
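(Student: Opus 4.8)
The plan is to establish the two assertions of the lemma separately: first the explicit realization, and then the identification with $\Sp(n,\bbR)/\rU(n)$.

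For the realization I would take $U=\bmat{U_1\\U_2}\in M(J;K)$ and expand the two defining conditions blockwise. A direct computation gives $\T UKU=\T U_1U_2-\T U_2U_1$ and $U^{*}JU=i(U_1^{*}U_2-U_2^{*}U_1)$, so $\T UKU=0$ says that $\T U_1U_2$ is symmetric, while $U^{*}JU>0$ says that the Hermitian matrix $i(U_1^{*}U_2-U_2^{*}U_1)$ is positive definite. The key step is to show $U_2\in\GL(n,\bbC)$: if $U_2x=0$ for some $x\in\bbC^{n}$, then $Ux=\bmat{U_1x\\0}$ and hence $x^{*}(U^{*}JU)x=(Ux)^{*}J(Ux)=0$, which by positive definiteness forces $x=0$. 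Thus $U_2$ is invertible, and acting on the right by $U_2^{-1}\in\GL(n,\bbC)$ I may normalize $U_2=I_n$. With this normalization the two conditions read exactly $\T U_1=U_1$ and $\frac1i(U_1-U_1^{*})>0$, which is the asserted set; conversely any such $U_1$ manifestly gives a point of $D(J;K)$.

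For the isomorphism I would let $\cG(J)\cong\Sp(n,\bbR)$ act on $D$ by $g\cdot[U]=[gU]$, as in the BDI case. Since $J=iK$ and $J^{2}=I_{2n}$, one checks $\cG(J)=\ckakko{G\in\GL_{2n}(\bbR);\;\T GKG=K}$, and a real symplectic $G$ satisfies both $\T GKG=K$ and $G^{*}JG=J$; hence the action preserves $\T UKU=0$ and $U^{*}JU>0$, i.e.\ it preserves $D(J;K)$. Writing a normalized representative as $U=\bmat{Z\\I_n}$ with $Z=U_1$, the realization identifies $D(J;K)$ with the Siegel upper half space $\ckakko{Z\in\Sym_n(\bbC);\;\Im Z>0}$, using $\frac1i(Z-Z^{*})=2\Im Z$ for symmetric $Z$. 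For $g=\mat{A&B\\C&D}$ one has $g\bmat{Z\\I_n}=\bmat{AZ+B\\CZ+D}$, and since $gU\in D(J;K)$ again has invertible lower block, the action is the classical fractional-linear action $Z\mapsto(AZ+B)(CZ+D)^{-1}$.

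It then remains to prove transitivity and to compute the isotropy at the base point $v_0=[\bmat{iI_n\\I_n}]$, i.e.\ $Z_0=iI_n$. Transitivity is elementary: the translations $\mat{I_n&S\\0&I_n}$ with $S\in\Sym_n(\bbR)$ realize $Z\mapsto Z+S$, and the elements $\mat{\T M^{-1}&0\\0&M}$ with $M\in\GL(n,\bbR)$ realize $Z\mapsto\T M^{-1}ZM^{-1}$, so composing these carries $iI_n$ to an arbitrary $X+iY$ with $Y>0$. For the stabilizer, $g\cdot Z_0=Z_0$ forces $B=-C$ and $A=D$, so $g=\mat{A&B\\-B&A}$; the symplectic relation $\T gKg=K$ then reduces to $\T AA+\T BB=I_n$ and $\T AB=\T BA$, which is precisely the unitarity of $A+iB$, and $g\mapsto A+iB$ is an isomorphism from this isotropy group onto $\rU(n)$. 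Combined with transitivity this yields $D(J;K)\cong\Sp(n,\bbR)/\rU(n)$. I expect this last paragraph to be the main obstacle, since transitivity and the exact identification of the stabilizer amount to re-deriving the classical parametrization of the Siegel upper half space; as the phrase ``similarly as the type of BDI'' suggests, it could instead be organized through a decomposition $\cG(J)=\cP(J)\cK(J)$ together with $\cG(J)_{v_0}=\cK(J)$, exactly as in the BDI case. The blockwise realization, by contrast, is routine once the invertibility of $U_2$ is in hand.
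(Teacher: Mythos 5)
Your proposal is correct, and it is worth noting that the paper gives no proof of this lemma at all: the half-space realization is asserted outright, and the identification with $\Sp(n,\bbR)/\rU(n)$ is delegated to a later lemma (the decomposition $\cG(J)=\cP(J)\cK(J)$, the equality $\cG(J)_{u_0}=\cK(J)$, and transitivity) which in the CI section is itself stated without proof, as the analog of the BDI case. Your argument supplies everything that is missing. The block computations $\T UKU=\T U_1U_2-\T U_2U_1$ and $U^*JU=i(U_1^*U_2-U_2^*U_1)$, the invertibility of $U_2$ forced by positive definiteness (correctly singled out as the key point), and the normalization $U_2=I_n$ give the set-theoretic description; your verification that $JGJ=\T G^{-1}$ is equivalent to $\T GKG=K$ (via $J=iK$, $K^2=-I_{2n}$) justifies the paper's identification $\cG(J)\cong\Sp(n,\bbR)$ and the invariance of $D(J;K)$; and transitivity via $Z\mapsto Z+S$ and $Z\mapsto\T M^{-1}ZM^{-1}$, combined with the direct stabilizer computation $g=\mat{A&B\\-B&A}$ with $A+iB\in\rU(n)$, yields the coset description. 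The only genuine difference from the paper's (implicit) route is organizational: the paper, following its BDI proof, would identify the isotropy group abstractly as $\cK(J)=\ckakko{g\in\cG(J);\;\T g=g^{-1}}$ by conjugating with a Cayley-type matrix, whereas you compute the stabilizer concretely in the half-space model; the two descriptions agree, since a symplectic $g$ fixing $iI_n$ commutes with $K$ and is then orthogonal precisely when $A+iB$ is unitary. Your route is more elementary and self-contained; the paper's buys uniformity between the BDI and CI cases at the cost of leaving these verifications to the reader.
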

We define a subdomain $\Sigma \subset D$ (Shilov boundary of $D(J;K)$) by
\[
\Sigma= \ckakko{u=\bmat{U_1  \\ I_n};\; \frac{1}{i}(U_1- U_1^*)=0, \T U_1=U_1 }. 
\]

We take two points $u_0, w$ as
\begin{align*}
&u_0=[U_0]\in D(J;K), \quad w=[W]\in \Sigma, 
\\
&U_0=\mat{iI_n  \\ I_n}, \quad W=\mat{ 0 \\ I_n}. 
\end{align*}
For two points $u_1, u_2\in D(J;K)$ expressed as
\[
u_j=[U_j], \quad U_j=
\mat{U_{j,1} \\ U_{j,2} }
\quad (j=1,2), 
\]
we define $\Delta(u_1,u_2)\in \rM_n(\bbC)$ by
\begin{align*}
\Delta(u_1,u_2)&= (U_1 \Gamma_1^{-1})^*J(U_2\Gamma_2^{-1}), 
\end{align*}
\[
\Gamma_j=\T W U_j=U_{j,2} \quad (j=1,2). 
\]

Put
\begin{align*}
&\fkg(J)=\ckakko{X\in \gl(2n+r,\bbR);\; XJ +J\T X=0}
\\
&\qquad\cong \fsp(n,\bbR), 
\\
&\fkp(J)=\ckakko{X\in \fkg(J);\; X=\T X}
\\
& =\ckakko{
B\in \fkg(J);\; B=\mat{b&c \\   c&-b}, 
\begin{array}{l}
b, c\in \Sym_n(\bbR)
\end{array}
}. 
\end{align*}

\begin{proposition}[Analog to type CI]\label{prop_CI}
For $B\in \fkp(J)$, define a $\Sym^+_n(\bbR)$-valued function $h(s)$ by
\[
h(s)=\Delta(e^{sB}u_0, e^{sB}u_0)^{-1}. 
\]
Then, $h(s)$ satisfies the equation
\[
(h'h^{-1})'=(ch^{-1})^2. 
\]
\end{proposition}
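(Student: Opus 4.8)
The plan is to reduce Proposition~\ref{prop_CI} to the preceding Theorem on the ODE $(h'h^{-1})'=(ch^{-1})^2$, exactly along the lines of the type~BDI case. Concretely, I would build a $\cG(J)$-equivariant identification between the slice $\Omega(J)$ of the cone and the Siegel upper half space $D(J;K)$, under which the block-Gauss coordinate $h=\pi(G)$ of $G\in\Omega(J)$ (the top-left block) corresponds, up to the factor $\tfrac12$, to $\Delta(\cdot,\cdot)^{-1}$ on $D(J;K)$. Three ingredients are needed: (i) a bijection $\cF\colon\Omega(J)\xrightarrow{\sim}D(J;K)$ intertwining the action $g\cdot G=\T g^{-1}Gg^{-1}$ on $\Omega(J)$ with $g\cdot[U]=[gU]$ on $D(J;K)$; (ii) the identity $\Delta(\cF(G),\cF(G))^{-1}=\tfrac12\pi(G)$; and (iii) the computation $\cF^{-1}(e^{sB}u_0)=e^{-2sB}$.

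First I would set up $\cF$ by equivariance from the basepoint, defining $\cF(\T g^{-1}g^{-1})=[gU_0]$ for $g\in\cG(J)$. This is well defined and bijective provided the two isotropy groups coincide. The stabilizer of $I_{2n}$ under $g\cdot G=\T g^{-1}Gg^{-1}$ is $\cG(J)\cap\rO(2n)=\cK(J)\cong\rU(n)$. For the stabilizer of $u_0=[U_0]$ with $U_0=\T{(iI_n,\ I_n)}$ I would use the eigenrelation $JU_0=U_0$; writing a real $g=\begin{pmatrix}a&b\\c&d\end{pmatrix}\in\cG(J)$, the condition $gU_0\in U_0\,\GL(n,\bbC)$ forces $d=a$ and $c=-b$, i.e. $g=\begin{pmatrix}a&b\\-b&a\end{pmatrix}$, which is precisely the image of $\rU(n)=\cK(J)$ in $\Sp(n,\bbR)$. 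Since $\cG(J)\curvearrowright D(J;K)$ is transitive (the realization gives $D(J;K)\cong\Sp(n,\bbR)/\rU(n)$) and preserves $M(J;K)$ (because $\T gJg=J$ and $\T gKg=K$ for $g\in\cG(J)$, using $J=iK$ and $J^2=I$), this yields the desired equivariant bijection.

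Next I would prove $\Delta(\cF(G),\cF(G))^{-1}=\tfrac12\pi(G)$. At the basepoint this is immediate: $JU_0=U_0$ gives $\Delta(u_0,u_0)=U_0^{*}JU_0=U_0^{*}U_0=2I_n$, while $\pi(I_{2n})=I_n$. For general $G$ I would either transport this along the group action through the automorphy factor $\Gamma_j=\T WU_j$, or, mirroring the BDI computation, write $\cF$ explicitly in the block coordinates of $G$ and compute $\Delta$ by hand. Along the way I would record that the output is genuinely $\Sym^+_n(\bbR)$-valued: in the affine chart $u=[\T{(U_1,\ I_n)}]$ one has $\Delta(u,u)=\tfrac1i(U_1-U_1^{*})=2\,\Im U_1$, a real symmetric positive-definite matrix, so that $h(s)=\Delta(e^{sB}u_0,e^{sB}u_0)^{-1}$ indeed lies in $\Sym^+_n(\bbR)$.

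Finally I would combine the pieces. Taking $g=e^{sB}$ (legitimate since $B\in\fkp(J)\subset\fkg(J)$, and $e^{sB}\in\Omega(J)$) and using $\T B=B$, ingredient (iii) gives $\cF^{-1}(e^{sB}u_0)=\T{(e^{sB})}^{-1}(e^{sB})^{-1}=e^{-2sB}=:G(s)$. Then (ii) yields $h(s)=\Delta(e^{sB}u_0,e^{sB}u_0)^{-1}=\tfrac12\pi(G(s))=:\tfrac12\tilde h(s)$. Since $G'G^{-1}=-2B=B(-2b,-2c)$, the Theorem applied to $G(s)$ shows $(\tilde h'\tilde h^{-1})'=((-2c)\tilde h^{-1})^2=4(c\tilde h^{-1})^2$. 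Because $h'h^{-1}=\tilde h'\tilde h^{-1}$ is scale invariant and $c\tilde h^{-1}=\tfrac12\,ch^{-1}$, this becomes $(h'h^{-1})'=4\cdot\tfrac14(ch^{-1})^2=(ch^{-1})^2$, as claimed. The main obstacle is ingredient (ii): producing the explicit form of $\cF$ and verifying $\Delta(\cF(G),\cF(G))^{-1}=\tfrac12\pi(G)$ over $\bbC$, keeping careful track of the interplay between the complex structure $J$, the symplectic form $K$, and the reality of $h$; the ODE step and the basepoint verifications are then routine.
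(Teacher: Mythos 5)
Your proposal is correct and takes essentially the same route as the paper: an equivariant bijection $\cF\colon\Omega(J)\to D(J;K)$ with $\Delta(\cF(G),\cF(G))^{-1}=\tfrac12\pi(G)$, the identity $\cF^{-1}(e^{sB}u_0)=e^{-2sB}$, and an application of the preceding Theorem to $G(s)=e^{-2sB}$ (whose logarithmic derivative is $B(-2b,-2c)$) followed by the $\tfrac12$-rescaling, which you carry out even more carefully than the paper does (the paper misprints $4(c\tilde h^{-1})^2$ as $4(ch^{-1})^2$ in the intermediate step). The ingredient you flag as the main obstacle, (ii), is exactly what the paper settles by a short direct computation—writing $\cF(G)=[U]$ with blocks $iI_n-G_2$ and $h$, taking $\Gamma=h$, and using the constraint $\T G_2 h=hG_2$ coming from $G\in\Omega(J)$ to get $\Delta(\cF(G),\cF(G))=2h^{-1}$—so both of your outlined strategies (direct block computation, or transport via the automorphy factor $\Gamma_g$) go through routinely.
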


We can define an action $\cG(J) \curvearrowright D(J:K)$ by
\[
g\cdot u=[gU] \quad (g\in \cG(J), u=[U]\in D(J;K)). 
\]
We also define subgroups: 
\begin{align*}
&\cP(J)=\ckakko{g\in \cG(J);\; g=\mat{*&* \\ 0&* }}, 
\\
&\cK(J)=\ckakko{g\in \cG(J);\; \T g=g^{-1}}\cong\rU(n), 
\\
&\cG(J)_{u_0}=\ckakko{g\in \cG(J);\; gu_0=u_0}. 
\end{align*}

\begin{lemma}
The following assertions hold. 
\begin{enumerate}
\item
$\cG(J)=\cP(J)\cK(J)$. 
\item
$\cG(J)_{u_0}=\cK(J)$. 
\item
The action $\cG\curvearrowright D(J;K)$ is transitive and the following expression holds: 
\[
D(J;K)
\cong \Sp(n,\bbR)/\rU(n). 
\]
\end{enumerate}
\end{lemma}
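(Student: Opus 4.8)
The plan is to follow the same strategy used in the type~BDI case, but to exploit the Siegel upper half space realization already established. Via that realization, $u=[\bmat{U_1 \\ I_n}]\mapsto U_1=X+iY$ with $X,Y\in\Sym_n(\bbR)$ and $Y>0$, the action $g\cdot u=[gU]$ of $g=\mat{a&b \\ c&d}$ becomes the fractional-linear action $Z\mapsto(aZ+b)(cZ+d)^{-1}$, and the base point $u_0$ corresponds to $iI_n$. With this dictionary the three assertions reduce to standard facts about $\Sp(n,\bbR)\cong\cG(J)$ acting on the Siegel upper half space, and I would treat assertion~2 first, then transitivity, then deduce assertions~1 and~3 formally.

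For assertion~2 I would argue directly from $U_0=\bmat{iI_n \\ I_n}$. The relation $gu_0=u_0$ means $gU_0=U_0\lambda$ for some $\lambda\in\GL(n,\bbC)$; writing $g=\mat{\alpha&\beta \\ \gamma&\delta}$ with real blocks, the bottom block gives $\lambda=i\gamma+\delta$, and substituting into the top block and separating real and imaginary parts forces $\delta=\alpha$ and $\gamma=-\beta$. Thus $g=\mat{\alpha&\beta \\ -\beta&\alpha}$, which is exactly the condition that $g$ commute with $K=\mat{0&I_n \\ -I_n&0}$. It then remains to observe that, for $g\in\cG(J)=\Sp(n,\bbR)$, commuting with $K$ is equivalent to $\T g=g^{-1}$: indeed $\T gKg=K$ together with $gK=Kg$ gives $\T g g K=K$, hence $\T g g=I_{2n}$; and conversely $\T g=g^{-1}$ turns the symplectic identity into $Kg=gK$. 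This identifies $\cG(J)_{u_0}$ with $\cK(J)$.

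For the transitivity underlying assertion~3 I would exhibit an explicit element of the parabolic $\cP(J)$ carrying $u_0$ to an arbitrary point. Given a target $Z=X+iY$ with $Y>0$, set $p=\mat{Y^{1/2}&XY^{-1/2} \\ 0&Y^{-1/2}}$ using the symmetric positive square root of $Y$. A short check shows $\T pKp=K$ (it suffices that $\T{(Y^{1/2})}\,Y^{-1/2}=I_n$ and that $Y^{-1/2}XY^{-1/2}$ is symmetric), so $p\in\cP(J)$; and right-multiplying $pU_0$ by $Y^{1/2}$ gives $pU_0\sim\bmat{X+iY \\ I_n}$, i.e. $p\cdot u_0=u$. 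Hence $\cP(J)$, and a fortiori $\cG(J)$, acts transitively on $D(J;K)$.

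Assertions~1 and~3 then follow with no further computation. For assertion~1, given $g\in\cG(J)$, transitivity of $\cP(J)$ provides $p\in\cP(J)$ with $pu_0=gu_0$; then $p^{-1}g\in\cG(J)_{u_0}=\cK(J)$ by assertion~2, whence $g=p(p^{-1}g)\in\cP(J)\cK(J)$, giving $\cG(J)=\cP(J)\cK(J)$. For the isomorphism in assertion~3, transitivity together with the stabilizer computation yields $D(J;K)\cong\cG(J)/\cG(J)_{u_0}\cong\Sp(n,\bbR)/\rU(n)$ by the orbit--stabilizer theorem. I expect the only genuinely delicate point to be the bookkeeping: verifying $\T pKp=K$ for the upper-triangular $p$ and keeping the real group versus complex domain separation straight when passing between the abstract action $[gU]$ and the Möbius action; everything else is formal once assertion~2 and the transitivity element are in hand.
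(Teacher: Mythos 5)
Your proof is correct, and it is worth stressing that the paper never actually proves this lemma: in the CI section it is stated with no proof, and even its BDI counterpart is proved only for assertion 2, by conjugating with the Cayley-type matrix $A$ so that the base point becomes $v_0$ and the stabilizer is visibly the set of block-upper-triangular elements, whose intersection with the pseudo-orthogonal group is the orthogonal subgroup. Your route is genuinely different and more complete. For assertion 2 you work directly on representatives: separating real and imaginary parts of $gU_0=U_0\lambda$ identifies the stabilizer with the commutant of the complex structure $K$, and then the symplectic identity $\T gKg=K$ (equivalent to the paper's defining condition $JGJ=\T G^{-1}$, since $J=iK$ and $J^2=I_{2n}$) converts commutation with $K$ into $\T gg=I_{2n}$; this exploits the standard fact $\rU(n)=\Sp(n,\bbR)\cap\mathrm{O}(2n)$ in place of the paper's change of frame, which is the natural move here since CI, unlike BDI, has a complex structure to commute with. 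For assertions 1 and 3 — which the paper leaves entirely to the reader — you supply the missing substance: the explicit element
\[
p=\mat{Y^{1/2}&XY^{-1/2}\\ 0&Y^{-1/2}}\in\cP(J)
\]
(your verification of $\T pKp=K$ is exactly the right computation) gives transitivity of the parabolic subgroup on the Siegel upper half space, after which $\cG(J)=\cP(J)\cK(J)$ and $D(J;K)\cong\Sp(n,\bbR)/\rU(n)$ follow formally from assertion 2 and orbit–stabilizer, at the same level of rigor the paper itself adopts. Two minor points, both harmless: the fractional-linear dictionary $Z\mapsto(aZ+b)(cZ+d)^{-1}$ presupposes invertibility of $cZ+d$, but your actual arguments never use it, since both the stabilizer computation and the transitivity element are handled directly on matrix representatives; and your deduction of assertion 1 implicitly uses that $\cG(J)$ preserves $D(J;K)$, which the paper asserts when it defines the action, so nothing is missing.
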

%

\paragraph{Proof of the proposition}
Define a manifold
\begin{align*}
\Omega(J)&=\ckakko{g\in \Herm^+_{2n}(\bbC);\; gJ g^*=J, gK\T g=K}
\\
&=\ckakko{g\in \Sym^+_{2n}(\bbR);\; gJ\T g=J}
\end{align*}
and a map $\cF: \Omega(J)\xrightarrow{} D$ by
\[
\cF(G)=\bmat{ iI_n-G_2  \\ h}, \quad
G=\mat{h &* \\ G_2&*}\in \Omega(J;K), 
\]
and a map $\pi: \Omega(J)\rightarrow\Sym^+_n(\bbR)$ by
\[
\pi(G)=h, \quad
G=\mat{h &* \\ G_2&* }\in \Omega(J;K). 
\]
We define an action $\cG(J)\curvearrowright \Omega(J)$ by
\[
g\cdot G=\T g^{-1}Gg^{-1}\quad (g\in \cG(J), G\in \Omega(J)). 
\]

\begin{lemma}
The following assertions hold. 
\begin{enumerate}
\item
$\cF(g\cdot G)=g\cF(G)$
for any $g\in \cG(J), G\in \Omega(J)$. 
\item
$\cF(\Omega(J))\subset D(J)$ and the map $\cF: \Omega(J)
\rightarrow D(J)$ is bijective. 
\item
For any $G\in \Omega(J)$, it holds that
\[
\frac12\pi(G)=\Delta(\cF(G),\cF(G))^{-1}. 
\]
\end{enumerate}
\end{lemma}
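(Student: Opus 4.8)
The plan is to reproduce, step for step, the argument given for the type BDI lemma: first establish the $\cG(J)$-equivariance $\cF(g\cdot G)=g\cF(G)$, then read off surjectivity and injectivity from the transitive action, and finally verify the normalization $\frac12\pi(G)=\Delta(\cF(G),\cF(G))^{-1}$ by a single block computation. The genuinely new features here are that the ambient quotient $D$ is complex, so one must track Hermitian conjugation $*$ against transposition $\T$, and that $D(J;K)$ is cut out of $D$ by the extra isotropy constraint $\T U K U=0$, which has to be checked to guarantee that the image lies in $D(J;K)$.

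For assertion 1, I would begin with the base point. Since the first column of blocks of $I_{2n}$ is $\mat{I_n \\ 0}$, one gets $\cF(I_{2n})=\bmat{iI_n \\ I_n}=u_0$. I then reduce to this case by equivariance. Writing an arbitrary $g\in\cG(J)\cong\Sp(n,\bbR)$ as $g=pk$ with $p\in\cP(J)$, $k\in\cK(J)$ (assertion 1 of the preceding lemma), and using that the symplectic parabolic takes the form $p=\mat{p_1 & p_2 \\ 0 & \T p_1^{-1}}$ with $p_1^{-1}p_2$ symmetric, I would compute $p\cdot I_{2n}=\T p^{-1}p^{-1}$ blockwise, extract its $(1,1)$ and $(2,1)$ blocks $h=\T p_1^{-1}p_1^{-1}$ and $G_2=-p_1\T p_2\T p_1^{-1}p_1^{-1}$, and apply $\cF$. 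Comparing with $p\,\cF(I_{2n})=pu_0=\mat{ip_1+p_2 \\ \T p_1^{-1}}$, the two representatives differ by the right $\GL_n(\bbC)$-factor $p_1^{-1}$, and the matching of the top blocks is exactly the symmetry of $p_1^{-1}p_2$. Equivariance at a general $G=g_0\cdot I_{2n}$ then follows formally, precisely as in the BDI proof.

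For assertion 2, the containment $\cF(\Omega(J))\subset D(J;K)$ is where the new constraint enters: for $U=\mat{iI_n-G_2 \\ h}$ coming from $G=\mat{h & * \\ G_2 & *}\in\Omega(J)$, I would compute $U^*JU=2h$ and $\T U K U=hG_2-\T G_2 h$. The first is positive because $h$ is positive definite, and the second vanishes by the relation $hG_2=\T G_2 h$, which is read off from the $(1,2)$ block of $G\cdot(JGJ)=I$ (recall $JGJ=G^{-1}$). Surjectivity onto $D(J;K)$ is then immediate from equivariance together with $\cF(I_{2n})=u_0$ and the transitivity of $\cG(J)\curvearrowright D(J;K)$; injectivity holds because the stabilizer of $I_{2n}$ under $g\cdot G=\T g^{-1}Gg^{-1}$ and the stabilizer of $u_0$ both equal $\cK(J)$, so $\cF$ induces a bijection $\cG(J)/\cK(J)\to\cG(J)/\cK(J)$.

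For assertion 3, I would evaluate $\Delta(\cF(G),\cF(G))=(Uh^{-1})^*J(Uh^{-1})$ with $\Gamma=\T W U=h$ directly; the cross terms combine to $2h^{-1}+i(G_2 h^{-1}-h^{-1}\T G_2)$, and the imaginary correction again dies by $hG_2=\T G_2 h$ (equivalently $G_2 h^{-1}=h^{-1}\T G_2$), leaving $\Delta=2h^{-1}=2\pi(G)^{-1}$, which is the claim. The main obstacle throughout is purely bookkeeping: correctly interleaving $*$ and $\T$ so that the single structural relation $hG_2=\T G_2 h$ can be invoked to kill each unwanted imaginary part; once that relation is established, every step is forced and the argument is identical in spirit to the BDI case.
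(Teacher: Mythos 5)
Your proof is correct and follows essentially the same route as the paper's: reduction to the base point $\cF(I_{2n})=u_0$ via the $\cP(J)\cK(J)$ decomposition for assertion 1, transitivity plus coincidence of the two stabilizers with $\cK(J)$ for assertion 2, and the direct block computation in which the relation $hG_2=\T G_2 h$ (forced by $JGJ=G^{-1}$) kills the imaginary part for assertion 3. The one place you go beyond the paper --- whose proof of this lemma actually omits assertion 2 in the CI case, deriving containment in the BDI analogue from equivariance and transitivity rather than by computation --- is your explicit check that $\T UKU=hG_2-\T G_2 h=0$ and hence $U^*JU=2h>0$, which is a worthwhile detail since it makes visible where the extra CI constraint $\T UKU=0$ is actually used.
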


From the above lemma, we have the following commutative diagram: 
\[
\begin{xy}
\xymatrix{
\Omega(J) \ar[r]^-{\cF} \ar[dr]^{\circlearrowright}_-{\frac12 \pi}& D(J;K) \ar[d]^-{\Delta(\cdot,\cdot)^{-1}}
\\
&\Sym^+_n(\bbR)
}
\end{xy}
\]
\begin{proof}
Each $G\in \Omega(J)$ is expressed as
\[
G=\T g^{-1} g^{-1}
=\mat{
h&hn_1 \\
\T n_1 h&I_n+\T n_1 h n_1
}, 
\]
\[
\T g^{-1}=\mat{I_n&0 \\ \T n_1&I_r}
\mat{h^{\frac12}&0 \\ 0&h^{-\frac12}}\in \cG(J). 
\]
We also see that 
\begin{align*}
gu_0
&=\mat{I&-n_1 \\ 0&I }
\mat{h^{\frac12}&0 \\ 0&h^{-\frac12}}
\bmat{iI_n  \\ I_n}
\\
&
=\bmat{ih^{-1}- n_1   \\ I_n}. 
\end{align*}

{\it 1}. 
First, we show in the case $G=I_{2n}$. 
Take an arbitrary $g\in \cG(J)$ and express it as
\[
g=pk, \quad p\in \cP(J), \; k\in \cK(J), 
\]
\[
p=\mat{p_1&-p_2 \\ 0&\T p_1^{-1}}. 
\]
Then, using the relation $\T p^{-1}=J^{-1}pJ$, we get
\begin{align*}
p\cdot I_{2n}&=\T p^{-1}Gp^{-1}
\\
&=\mat{\T p_1^{-1}&0  \\ p_2& p_1}
\mat{ p_1^{-1}&\T p_2 \\  0& \T p_1}
\\
&=\mat{
\T p_1^{-1}p_1^{-1} &* \\
 p_2 p_1^{-1}&*}. 
\end{align*}
Therefore, 
\begin{align*}
\cF(p\cdot I_{2n})&
=\bmat{iI_n-p_2 p_1^{-1}\\ \T p_1^{-1}p_1^{-1}}
=\bmat{ip_1-p_2 \\ \T p_1^{-1}}. 
\end{align*}
On the other hand, we see that
\begin{align*}
p\cF(I_{2n})=p\bmat{iI_n  \\ I_n}
=\bmat{ip_1-p_2 \\ \T p_1^{-1}}
\end{align*}
and then $\cF(p\cdot I_{2n})=p\cF(I_{2n})$. Therefore, 
\begin{align*}
&\cF(g\cdot I_{2n})=\cF(p\cdot I_{2n})
\\
&=pu_0=pku_0=g\cF(I_{2n}). 
\end{align*}

Next, we show in the general case $g\in \cG(J), G\in \Omega(J)$. 
Express an arbitrary $G\in \Omega(J)$ as
\[
G=g_0\cdot I_{2n} , \quad g_0\in \cG(J). 
\]
Then we see that
\begin{align*}
&\cF(g\cdot G)=\cF(gg_0\cdot I_{2n})
=gg_0\cF( I_{2n})
\\
&=g\cF(g_0 \cdot I_{2n})
=g\cF(G). 
\end{align*}


{\it 3}. 
Take an arbitrary $G\in \Omega(J)$ and express it as
\begin{align*}
G=\mat{h & \T G_2 \\ G_2 & *}, \quad
\T G_2 h-hG_2=0. 
\end{align*}
Then we see that
\[
\cF(G)=[U], \quad U=\mat{iI_n-G_2   \\ h}. 
\]
Putting
\[
\Gamma=\T WU=h, 
\]
we see that
\begin{align*}
&\Delta(\cF(G), \cF(G))= (U\Gamma^{-1})^*J(U\Gamma^{-1})
\\
&=\frac{1}{i}\ckakko{(ih^{-1}-G_2 h^{-1})-(ih^{-1}-G_2 h^{-1})^*}
\\
&=2h^{-1} >0. 
\end{align*}
\end{proof}

\noindent
{\it Proof of Proposition \ref{prop_CI}. }
From the definition, we have
\[
\cF^{-1}(e^{sB}u_0)=e^{-2sB}=:G(s). 
\]
Then, the function $\tilde{h}(s):=\pi(G(s))$ satisfies 
\[
(\tilde{h}'\tilde{h}^{-1})'
=4(ch^{-1})^2. 
\]
Therefore, the function
\[
{h}(s):=\Delta(e^{sB}u_0, e^{sB}u_0)^{-1}=\frac12 \tilde{h}(s)
\]
satisfies
\[
({h}'{h}^{-1})'=(ch^{-1})^2. 
\]
\qed

%
%
%
%

\section{Remarks}
We gave individual calculations for each symmetric domain of type BDI and CI. The same calculation, which consists of the Lagrangian calculation and the realization of symmetric domains, is likely applicable to other symmetric domains classified by Cartan, ref \cite{H,PS}. For this purpose, it is natural to expect a unified way to treat general symmetric domains.

\vspace{1cm}
\begin{flushright}
{\it
\begin{tabular}{l}
Hiroto Inoue
\\
Institute of Mathematics for Industry
\\
Kyushu University
\\
744 Motooka, Nishi-ku
\\
Fukuoka, 819-0395, Japan
\\
(E-mail: hi-inoue@math.kyushu-u.ac.jp)
\end{tabular}
}
\end{flushright}

\end{document}